\theoremstyle{definition} 
\newtheorem{dfn}{Definition}
\theoremstyle{plain}
\newtheorem{tma}{Theorem}
\newtheorem{lma}{Lemma}
\newtheorem{cor}{Corollary}
\theoremstyle{remark}
\newcommand{\gf}{\varphi} 
\newcommand{\geps}{\varepsilon}
\DeclareMathOperator{\pr}{pr} 
\DeclareMathOperator{\enc}{emb} 
\DeclareMathOperator{\id}{id}
\DeclareMathOperator{\ba}{bar}
\newcommand{\ti}[1]{\stackrel{\thicksim}{#1}} 
\newcommand{\bl}[1]{\mathcal{L}_{#1, \star}}
\newcommand{\h}[2]{\hom(#1 , #2)}
\newcommand{\AG}{\mathcal{A}/\mathcal{G}_\star}
\newcommand{\AGp}{\mathcal{A}/\mathcal{G}_{\star,p}}
\title{
Effective theories of connections and curvature: \\
abelian case}
\author{
Homero G. D\'\i az-Mar\'\i n$^{1, 2}$%
\footnote{e-mail: \ttfamily homero@matmor.unam.mx} 
 and Jos\'e A. Zapata$^1$%
\footnote{e-mail: \ttfamily zapata@matmor.unam.mx} 
\\
{\it $^1$Instituto de Matem\'aticas, Universidad Nacional Aut\'onoma de M\'exico} \\ 
{\it A.P. 61-3, Morelia Mich. 58089, M\'exico}\\
{\it $^2$Instituto de F\'\i sica y Matem\'aticas,}\\ 
{\it Universidad Michoacana de San Nicol\'as de Hidalgo,}\\
{\it Edif. C-3, C. U., Morelia, Mich.  58040, M\'exico}
}
\date{}
\begin{document}

\maketitle

\begin{abstract} 
We introduce a notion of measuring scales for quantum abelian gauge systems. At each measuring scale a finite dimensional affine space stores information about the evaluation of the curvature on a discrete family of surfaces. 
Affine maps from the spaces assigned to finer scales to those assigned to coarser scales play the role of coarse graining maps. 
This structure induces a continuum limit space which contains information regarding curvature evaluation on all piecewise linear surfaces with boundary. 
The evaluation of holonomies along loops is also encoded in the spaces introduced here; thus, our framework is closely related to loop quantization and it 
allows us to discuss 
effective theories in a sensible way. 
We develop basic elements of measure theory on the introduced spaces which are essential for the applicability of the framework to the construction of quantum abelian gauge theories. 
\end{abstract}


\section{Introduction}
An effective theory has the objective of providing an approximate description of the behavior of the aspects of the system that can be measured at the given scale. We introduce a model of measuring scales for abelian gauge fields (connections on a $U(1)$ principal bundle) and provide the basic structures that makes it useful in building a quantum theory. The differential geometric setting that describes in detail the classical counterpart of this scenario was described in \cite{DZ}.

Our construction is closely related to loop quantization in the canonical form and in its covariant (spin foam) form, but we have   
added extra structure. The aim is to provide a language for discussing effective theories within loop quantized theories. A configuration of the effective theory may identify states of the system which can be resolved (differentiated) when measuring at finer scales, but it should be able to distinguish macroscopically different configurations.  In this respect, holonomies should not be the only variables measuring the connection. The following simple example captures one of the key ideas behind the structures introduced in this paper and is explained in detail in \cite{DZ}. Consider a one-parameter family of connection 1-forms in three dimensional euclidean space 
$A_\lambda = \lambda x  {\rm d} y$ and their respective magnetic fields $B_\lambda = \lambda {\rm d} x \wedge {\rm d} y$. If we know that the holonomy around one given loop is trivial we cannot 
determine the value of the parameter $\lambda$; 
we cannot say if the measured field is weak or arbitrarily strong. However, if we  measure the magnetic flux through a surface that has the loop as its boundary (and which is transverse to the $z$ direction everywhere), we can certainly determine 
the value of $\lambda$. 

Our framework can be used as a tool to study the macroscopic behavior of loop quantized abelian gauge theories, or it can be used as a variant of the kinematics of loop quantization to construct a quantum abelian gauge theory in the continuum. Most of this paper deals with the kinematical part of the proposal, and in the last section we give a short description of 
how the developed structures fit into the general plan to construct physical quantum gauge field theories. 

Curvature can be defined in our framework. The dynamics of a quantum gauge theory is usually constructed from the quantization of a classical expression which involves the curvature of the connection. Many classical observables (of local and global character) are written in terms of the curvature; the existence of their quantum counterparts certainly deserves to be studied. 
Also, curvature observables yield information about the topology of the bundle on which the gauge theory is defined. The holonomy variables of a discrete set of loops do not capture such information. 

In Section \ref{sec2} we introduce a notion of measuring scales 
associated with a triangulation 
$\Delta$, and recall 
a notion of simplified configurations called $\Delta$-flat connections which let us evaluate holonomies along loops economically. The space of possible holonomy evaluations is the finite dimensional space 
$\mathcal{A}/\mathcal{G}_{\Delta, \star}^\mathrm{Hol}=\h{\bl{\Delta}}{G}$; see \cite{MMZ} and \cite{DZ}. 
In Section \ref{sec4}, working in the case $G=U(1)$, we define the affine space $\Omega_{\Delta,p}$ that stores the surface evaluations of the curvature at scale $\Delta$; furthermore we show how a covering map
$$
\mathfrak{e}_\Delta : 
\Omega_{\Delta,p}\times\ \h{\pi_1(M,\star)}{U(1)} 
\rightarrow\mathcal{A}/\mathcal{G}_{\Delta, \star}^\mathrm{Hol}
$$
let us recover the information about 
holonomy evaluation when we know the curvature evaluation 
and the holonomy evaluation along a finite set of generating 
noncontractible loops. 
While the factor $\Omega_{\Delta,p}$ yields the holonomy evaluation on contractible loops, the factor $\h{\pi_1(M,\star)}{U(1)}$ stores the holonomy for non contractible loops. Related to this map, in \cite{DZ} we showed that for $M$ homeomorphic to the sphere $S^d$, each element $\omega\in\Omega_{\Delta,p}$ determines a unique bundle over $M$ and 
determines the connection modulo gauge up to ``microscopic details'' in the sense that any smooth connection compatible with the data $\omega$ can be locally deformed into 
a singular connection determined only by $\omega$. 
In Section \ref{sec3} we define coarse graining maps; each coarse graining map 
produces data at a coarser scale from data on a finer scale. We also 
define an affine space which stores curvature evaluations at ``the finest scale'' of a sequence of scales, 
$\underleftarrow{\Omega}_p = Projlim(\Omega_{i,p})$; from this space we can coarse grain to know the curvature evaluation at any scale of the sequence. 
Section \ref{sec5} introduces a family of kinematical gaussian measures on $\underleftarrow{\Omega}_p$ depending on the area form of a riemannian metric on 
$M$. 
Section \ref{convergence} proves certain convergence results which allow our kinematical structures at the 
continuum limit deal with geometric objects of the piecewise linear (PL) category. 
Section \ref{sec6} studies curvature functions and some of their properties. 
Section \ref{indep} 
proves 
that our results are, in a specific sense, independent of the choices made during the construction. 
In the last section we sketch how the structures introduced in this paper fit into the plan to construct quantum field theories following Wilson's renormalization procedure. 


\section{Holonomy evaluations for effective theories} \label{sec2}

Let $M$ be a smooth compact oriented $d$ dimensional manifold, and let $(E,p,M)$ be a principal $G$-bundle with 
total space $E$, base space $M$, 
projection  $p:E\rightarrow M$ and compact structure group $G \approx p^{-1}(x)$ for all $x \in M$. 
Let $\mathcal{A}_p$ denote the space of smooth connections on $E$, 
$\mathcal{G}_p$ denote the group of gauge transformations, and 
$\mathcal{G}_{\star,p} \subset \mathcal{G}_p$ denote the subgroup that acts as the identity on the fiber $p^{-1}(\star)$ over a base point $\star\in M$. 
In order to identify the fiber over the base point with $G$, we fix 
a point $y_\star$ over the fiber $p^{-1}(\star)$. 
We declare two piecewise smooth closed paths based on $\star\in M$, $l,l':[0,1]\rightarrow M$,  as honomicaly equivalent if $\mathrm{Hol}_A(l)=\mathrm{Hol}_A(l')$ for all $A\in\mathcal{A}$. We can describe the holonomy $\mathrm{Hol}_A(l)$ by an element of $G$ that depends only on the $\mathcal{G}_{\star,p}$-class $[A]\in\mathcal{A}/\mathcal{G}_{\star,p}$.

Recall that each class $[A]$ can be described in terms of the set of its holonomies along all piecewise smooth loops \cite{Ba}.

We say that two piecewise smooth loops modulo reparametrization $l,l'$ are \emph{retracing equivalent} if there is a finite sequence of piecewise smooth loops $l_0=l,l_1,\dots,l_n=l',$ such that for two consecutive loops $l_k,l_{k+1}$ we have $l_k=m_1\cdot m_2$ and $l_{k+1}=m_1\cdot m\cdot m^{-1}\cdot m_2$. Two retracing equivalent loops are also holonomically equivalent. We denote the retracing equivalence relation by $l\sim l'$. The set of equivalence classes of loops modulo retracing is a group.

We will work on the piecewise linear (PL) category. There are advantages of working with the PL category, for instance two holonomicaly equivalent PL loops are also retracing equivalent.

\begin{dfn}
Consider a simplicial complex $|\Delta|$ homeomorphic to $M$.
Let $\phi:|\Delta|\rightarrow M$ be a homeomorphism whose restriction on each simplex is a smooth non singular immersion.
We say that $(|\Delta|,\phi)$ is a smooth triangulation or  Whitehead triangulation of $M$. 
\end{dfn}

A Whitehead triangulation defines a PL structure that is compatible with the smooth structure of $M$ in the sense that a map that is piecewise linear is also piecewise smooth.
It is known that every smooth manifold admits smooth triangulations and that any two such triangulations are equivalent in the sense that the simplicial manifolds $|\Delta | , |\Delta' |$ are related by a piecewise linear map \cite{KS}.  

Take a fixed vertex $\star\in\phi({\rm Sd}\,|\Delta|^{(0)})$, and let $\bl{M}$ be the group of PL loops modulo retracing. The holonomy map $\mathrm{Hol}_A$ can be regarded as a homomorphism $\mathrm{Hol}_A:\bl{M}\rightarrow G$. Furthermore the map $\mathrm{Hol}([A]):=\mathrm{Hol}_A$ defines an inclusion
$$
	\mathrm{Hol}:\AGp\rightarrow\h{\bl{M}}{G}.
$$

One verifies that this inclusion is dense in the Tychonoff topology. If we denote the closure of the inclusion $\AGp\subset\h{\bl{M}}{G} $ as $\overline{\AG}$, then we have the homeomorphism
$$\overline{\AG}\simeq\h{\bl{M}}{G}.$$
The uniform measure on the compact space $\overline{\AG}$ is called the Ashtekar-Lewandowski measure. This is the configuration space for loop quantization \cite{Th, Ro, AL}; its PL version was introduced in \cite{Za}. Each class $[A]\in\AGp$ can be described in terms of its holonomies along the huge group of all PL-loops $\bl{M}$.

We will specify the information available at a given scale in terms of a smooth triangulation $(|\Delta|,\phi)$. We use the triangulation  to define an equivalence relation of PL loops $l,l'$. 

\begin{dfn}
A point $p \in M$ is assigned the simplex of smallest dimension of $(|\Delta|,\phi)$ which contains it. 
In this way as a curve $c$ is traversed, it induces a sequence of neighboring simplices in $(|\Delta|,\phi)$.

We say that the loops $l,l'$ are equivalent at scale $(|\Delta|,\phi)$, $l\sim_{\Delta}l'$ if they induce the same sequence of neighboring simplices. 
The retracing equivalence relation $\sim$ for loops yields a retracing equivalence relation $\asymp$ for sequences of simplices. 
We denote the resulting finitely generated group of classes as
\begin{equation}\label{eqn:p_Delta}
	\bl{\Delta}:=\left(\left\{l\,\mid\,l\text{ is a PL loop}\right\}/\sim_\Delta\right)/\asymp.
\end{equation}
We call the elements of the set
\begin{equation}
	\mathcal{A}/\mathcal{G}_{\Delta, \star}^\mathrm{Hol}:=\h{\bl{\Delta}}{G}
\end{equation}
\emph{holonomy evaluations of $\Delta$-flat connections} on $M$.
\end{dfn}

Until this stage we have considered holonomy evaluations for each scale $(|\Delta|,\phi)$ which is only a part of the information that $\Delta$-flat connections encode. The rest of the information contained in $\Delta$-flat connections will be described starting in the next section; a more detailed account of the differential geometric aspects of $\Delta$-flat connections is given in \cite{DZ, MMZ}.

One advantage of working at scale $(|\Delta|,\phi)$ is that the group of classes of loops to be considered $\mathcal{L}_{\Delta,\star}$ is finitely generated. Hence $\h{\bl{\Delta}}{G}$ is a manifold.

\begin{dfn}\label{Sd}
We construct a simplicial complex ${\rm Sd}\,|\Delta|$ called the baricentric subdivision of $|\Delta|$. 
Refine each simplex of $|\Delta |$ using the triangulation generated by adding a new vertex placed in its baricenter (according to the euclidean metric of the simplex). 
A subset of $n$ such vertices defines a $n-1$ dimensional simplex of ${\rm Sd}|\Delta |$ 
if the corresponding subset of simplices of $| \Delta |$ can be ordered by inclusion; 
in this case, it defines a geometric simplex contained in the higher dimensional simplex of the set. 
We denote the set of $k$-simplices on ${\rm Sd}\,|\Delta|$ as ${\rm Sd}\,|\Delta|^k,$ and the $k$-skeleton of this polyhedra as ${\rm Sd}\,|\Delta|^{(k)}=\cup_{l=0}^k{\rm Sd}\,|\Delta|^l\subset {\rm Sd}\,|\Delta|$. For the set of images on $M$ under $\phi$ of $k$-simplices of $ {\rm Sd}\,|\Delta|$ we will use the notation $({\rm Sd}\,|\Delta|^k,\phi)$, and for the image of the $k$-skeleton we will use 
$\phi({\rm Sd}\,|\Delta|^{(k)})$. 
\end{dfn}

For a fixed scale $(|\Delta|,\phi)$ we consider the set $\mathcal{L}_{\phi({\rm Sd}\,|\Delta|^{(1)}),\star}$ of retracing classes of loops in $\phi({\rm Sd}\,|\Delta|^{(1)})$. 
For every $l_\Delta \in\mathcal{L}_{\Delta, \star}$ we have a representative loop $l_1:[0,1]\rightarrow M$, contained in $\phi({\rm Sd}\,|\Delta|^{(1)})$ such that 
$l_\Delta=[[l_1]_\Delta]_\asymp$. By observing that a homotopy class in the $1$-skeleton $\phi({\rm Sd}\,|\Delta|^{(1)})$ corresponds to a retracing class, we can prove the following lemma.

\begin{lma}\label{lma:0}
The loop class group $\mathcal{L}_{\Delta, \star}$ is isomorphic to the group of retracing classes of loops in $\phi({\rm Sd}\,|\Delta|^{(1)})$, i.e. $\mathcal{L}_{\phi({\rm Sd}\,|\Delta|^{(1)}),\star}$. We have the natural isomorphism
$$
	\bl{\Delta}\simeq \pi_1\left(\phi({\rm Sd}\,|\Delta|^{(1)}),\star \right).
$$

\end{lma}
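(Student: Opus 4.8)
The plan is to factor the statement into the two isomorphisms it asserts, $\bl{\Delta}\simeq\bl{\phi({\rm Sd}\,|\Delta|^{(1)})}$ and $\bl{\phi({\rm Sd}\,|\Delta|^{(1)})}\simeq\pi_1(\phi({\rm Sd}\,|\Delta|^{(1)}),\star)$, and to prove each in turn. I would dispatch the second one first, since it is the standard fact that on a one dimensional polyhedron (a graph) the retracing relation $\sim$ coincides with homotopy rel $\star$: an elementary retracing move $m_1\cdot m_2\mapsto m_1\cdot m\cdot m^{-1}\cdot m_2$ is precisely a backtrack insertion, and in a graph two edge loops are homotopic if and only if they are joined by a finite chain of backtrack insertions and deletions, i.e. by free reduction of edge words. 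Since concatenation of loops corresponds to multiplication of homotopy classes, the resulting bijection between retracing classes and $\pi_1$ is a group isomorphism.

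For the first isomorphism I would construct an explicit homomorphism. Each oriented edge of ${\rm Sd}\,|\Delta|$ joins the barycenters $\hat\tau,\hat\sigma$ of simplices of $|\Delta|$ with $\tau$ a proper face of $\sigma$; this is exactly the incidence data encoded in the sequence of neighboring simplices of $(|\Delta|,\phi)$. Traversing such an edge meets the two open cells $\tau,\sigma$, so an edge path in $\phi({\rm Sd}\,|\Delta|^{(1)})$ based at $\star$ induces a well defined sequence of neighboring simplices, hence a class in $\bl{\Delta}$. I would then define $\Psi:\bl{\phi({\rm Sd}\,|\Delta|^{(1)})}\to\bl{\Delta}$ as the descent of $l_1\mapsto[[l_1]_\Delta]_\asymp$. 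A backtrack $\hat\tau\to\hat\sigma\to\hat\tau$ produces the cell pattern $\tau,\sigma,\tau$, which is collapsed by the retracing relation $\asymp$ on sequences; therefore retracing equivalent edge loops have $\asymp$ equivalent simplex sequences, $\Psi$ is well defined, and it manifestly respects concatenation. Surjectivity of $\Psi$ is exactly the content of the paragraph preceding the lemma: every class of $\bl{\Delta}$ admits a representative loop $l_1$ supported in $\phi({\rm Sd}\,|\Delta|^{(1)})$ with $l_\Delta=[[l_1]_\Delta]_\asymp$.

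The crux, and the step I expect to be the main obstacle, is injectivity: I must show that passing from an edge loop to its simplex sequence and then quotienting by $\asymp$ loses no more information than graph homotopy already does. The mechanism is that the barycentric subdivision is arranged so that incidence of simplices of $|\Delta|$ is exactly adjacency in the graph $\phi({\rm Sd}\,|\Delta|^{(1)})$: two consecutive distinct open cells in a reduced simplex sequence are necessarily face related, and each such face relation is realized by a unique edge of ${\rm Sd}\,|\Delta|$. I would first establish that $\asymp$ reduction of sequences is confluent with unique normal forms, by a standard diamond argument mirroring free reduction, so that $\asymp$ equality of two sequences means equality of their reduced forms. Then I would reconstruct, by induction on the length of a reduced sequence $\sigma_0,\dots,\sigma_n$, the unique reduced edge path $\hat\sigma_0\to\cdots\to\hat\sigma_n$ realizing it, and identify this with the backtrack reduction of the original edge loop. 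The bookkeeping I would need to control is at the base point and the fact that a PL loop wiggling inside a single open cell contributes only stationary repetitions that $\asymp$ deletes, so nothing survives beyond the reconstructed edge path.

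Putting these together, if two edge loops $l_1,l_1'$ satisfy $\Psi([l_1])=\Psi([l_1'])$ then their simplex sequences share a reduced form, hence reconstruct the same reduced edge path, so $l_1\sim l_1'$; thus $\Psi$ is an isomorphism. Composing $\Psi$ with the graph isomorphism of the first paragraph yields the asserted natural isomorphism $\bl{\Delta}\simeq\pi_1(\phi({\rm Sd}\,|\Delta|^{(1)}),\star)$, where naturality is clear because every arrow in the construction is the descent of the tautological assignment of a loop to its simplex sequence.
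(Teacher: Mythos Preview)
Your proposal is correct and follows precisely the approach the paper sketches: the paper does not give a formal proof of this lemma, offering only the sentence ``By observing that a homotopy class in the $1$-skeleton $\phi({\rm Sd}\,|\Delta|^{(1)})$ corresponds to a retracing class, we can prove the following lemma,'' together with the preceding remark that every $l_\Delta\in\bl{\Delta}$ admits a representative loop in $\phi({\rm Sd}\,|\Delta|^{(1)})$. Your argument expands exactly these two hints --- surjectivity from the existence of edge-loop representatives, and the identification of retracing with homotopy on a graph --- and supplies the injectivity step (unique reduced edge path recovered from the reduced simplex sequence) that the paper leaves implicit.
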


Since we have an embedding $\phi:{\rm Sd}\,|\Delta|^{(1)}\rightarrow M$, this lemma allows us  to give a holonomy evaluation at each scale for smooth connection classes $[A]\in\AGp, $ which can be expressed as the map
$$
	{\mathrm{Hol}}_\Delta:\AGp\rightarrow\mathcal{A}/\mathcal{G}_{\Delta,\star}^{\mathrm{Hol}}.
$$
This map assigns a continuum to one. A complete description of a connection $[A]\in\AGp$ will require a sequence of ever finer scales.

There is another ambiguity when we only consider holonomies to study connections at a given scale: the group of holonomy evaluations $\mathcal{A}/\mathcal{G}_{\Delta,\star}^\mathrm{Hol}$ loses track of the topology of the principal bundle $p:E\rightarrow M$, as we noticed on \cite{DZ}. 
Recovering the topological data of the bundle using only holonomies is possible only if we consider 
all the PL loops embedded in the manifold. 

On the other hand, if we consider evaluations of the curvature on simplicial surfaces at a given scale, 
as we do starting in the following section, we can recover the principal bundle structure; see \cite{DZ}.

Let $[l]\in\bl{M}$ be the retracing class of a PL loop $l$; it can be shown that the class $[[l]_\Delta]_\asymp\in\bl{\Delta}$ does not depend on the representative element  $l'\in[l]$. Therefore we can define the surjective group homomorphism
\begin{equation}\label{eqn:pr_Delta}
	\widehat{\pr_{\Delta}}:\bl{M}\rightarrow\bl{\Delta},
\end{equation}
which, when considering homomorphisms onto $G$, gives rise to
\begin{equation}\label{eqn:iota_Delta}
	\widehat{\iota_{\Delta}}:
	\mathcal{A}/\mathcal{G}_{\Delta, \star}^\mathrm{Hol}	\rightarrow		\overline{\AG}.
\end{equation}
At each scale the map $\widehat{\iota_\Delta}$ can be used to regularize certain functions from the continuum; see \cite{DZ}. In addition, given a sequence of scales 
$\{ (|\Delta_i|,\phi_i)\} $ and a sequence of measures $\{ \mu_{\Delta_i}\}$ we get a sequence of measures $\{ \widehat{i_{\Delta_i}}_*\mu_{\Delta_i}\}$ in $\overline{\AG}$ which may converge in some sense as the scale gets finer. These issues will be addressed starting in Section \ref{sec3}.

Now we describe the ingredients for the \emph{coarse graining} map. 
 As we mentioned in Lemma \ref{lma:0}, for each retracing class $\tilde{l}\in\bl{\Delta}$ there exists a simplicial loop of the form $\phi\circ l$ representing it, where $l\in\mathcal{L}_{{\rm Sd}\,|\Delta|^{(1)},\star}$, i.e. $l:[0,1]\rightarrow {\rm Sd}\,|\Delta|^{(1)}$, $\phi\circ l(0)=\phi\circ l(1)=\star,$ $\tilde{l}=\left[[\phi\circ l]_\Delta\right]_{\asymp}$. Take the loop $\phi\circ l:[0,1]\rightarrow M$, and take its retracing class $[\phi\circ l]$. Since $[\phi\circ l]$ does not depend on the representative $l$ of the retracing class $[l]$ in ${\rm Sd}\,|\Delta|^{(1)}$, this procedure defines a homomorphism
\begin{equation}\label{eqn:ba_Delta}
\ba_\Delta :\bl{\Delta}\rightarrow\bl{M},
\end{equation}
by $\ba_{\Delta}(\tilde{l})=[\phi\circ l]$. Thus we also have the submersion $\beta_\Delta:=\h{\ba_\Delta}{G},$
$$
	\beta_\Delta:\overline{\AG}\rightarrow\mathcal{A}/\mathcal{G}_{\Delta, \star}^\mathrm{Hol}.
$$
Since $\ba_\Delta$ is a section of $\widehat{\pr_\Delta}$, it follows that $\widehat{\iota_\Delta}$ is a section of $\beta_\Delta$, i.e.
$$
\widehat{\pr_\Delta}\circ\ba_\Delta=\id_{\bl{\Delta}},\qquad\beta_\Delta\circ \widehat{\iota_\Delta}=\id_{\h{\bl{\Delta}}{G}}.
$$
We thus obtain the following commutative diagram
$$\xymatrix{
		\AGp
			\ar@{^{(}->}[r]^{\mathrm{Hol}}\ar[rd]_{\mathrm{Hol}_\Delta}
	&
		\overline{\AG}
			\ar@/^10pt/[d]^{\beta_\Delta}
	\\
	&
		\mathcal{A}/\mathcal{G}_{\Delta, \star}^\mathrm{Hol}
}.$$


In the following lines we show how to conveniently decompose the set of holonomy evaluations $\mathcal{A}/\mathcal{G}_{\Delta, \star}^\mathrm{Hol}$; 
roughly speaking the division will have a factor that encodes the holonomies along 
contractible loops and 
another factor encoding the holonomies along a subgroup of loops generating 
the fundamental group $\pi_1(M,\star).$ This is shown in Lemma \ref{lma:decomposition}.

The inclusion map $i : \left(\phi({\rm Sd}\,|\Delta|^{(1)}),\star \right) \to \left(M,\star \right)$ induces the epimorphism 
$i_\ast:\pi_1 \left(\phi({\rm Sd}\,|\Delta|^{(1)}),\star \right) \rightarrow\pi_1\left(M,\star\right)$. 
Hence there is an epimorphism 
$$
i_\ast:\bl{\Delta}\rightarrow \pi_1\left(M,\star\right).
$$ 
\begin{dfn}\label{L^0}
The subgroup of contractible loops will be denoted by 
$$
\bl{\Delta}^0=\ker i_\ast .
$$
\end{dfn}
$\bl{\Delta}^0$ is a normal subgroup of $\bl{\Delta}$, and since $i_\ast$ is an epimorphism we have 
$$
\pi_1\left(M,\star\right) \simeq \bl{\Delta} / \bl{\Delta}^0 . 
$$

Recall that Lemma \ref{lma:0} shows us that the groups 
$\bl{\Delta}$, 
$\mathcal{L}_{\phi({\rm Sd}\,|\Delta|^{(1)}),\star}$ and 
$\pi_1 \left(\phi({\rm Sd}\,|\Delta|^{(1)}),\star \right)$ 
are naturally isomorphic. 
Also recall that in simplicial homology $Z_1$ denotes $1$-chains (from the simplicial complex $\left(\phi({\rm Sd}\,|\Delta|),\star \right)$) that are cycles, $B_1$ denotes $1$-chains that are boundaries, and $H_1(M) = Z_1 / B_1$. 
Below we will use the ``chain map'' 
${\rm C} : \bl{\Delta}  \to Z_1$. 
%


Let $\bl{\Delta}^A \subset \bl{\Delta}$ be a subgroup such that $\bl{\Delta} = \bl{\Delta}^0 \bl{\Delta}^A$; that is, that every loop  can be written as 
$l = l^0 l^A$ with $l^0 \in \bl{\Delta}^0$ and $l^A \in \bl{\Delta}^A$. Notice tat this implies that the set of homotopy classes of $\bl{\Delta}^A$ generates 
$\pi_1(M,\star)$. 

We will divide our presentation of the factorization of $\mathcal{A}/\mathcal{G}_{\Delta, \star}^\mathrm{Hol}$ in two cases. 
Firstly, we will consider the case in which 
$\bl{\Delta}^0 \cap \bl{\Delta}^A = \{ \id \}$. In this case there is a clean factorization of the connection's degrees of freedom into local and global; in the following sections this case allows for a faster study of the coarse graining process and the continuum limit. 
The core of the simplification comes from the fact that every loop in $\bl{\Delta}$ can be written as a product of a loop in $\bl{\Delta}^A$ and a contractible loop in a unique way; $\bl{\Delta}$ is the semidirect product of 
$\bl{\Delta}^A$ acting on $\bl{\Delta}^0$, which is denoted by 
$\bl{\Delta}^0 \rtimes \bl{\Delta}^A$. 
Secondly, we will consider the case in which there is no subgroup $\bl{\Delta}^A \subset \bl{\Delta}$ such that 
$\bl{\Delta} = \bl{\Delta}^0 \bl{\Delta}^A$ and $\bl{\Delta}^0 \cap \bl{\Delta}^A$ be trivial; in this case every loop can be factorized, but the factorization is not unique. 
The non trivial behavior in this case comes from the existence of non contractible loops $l \in \bl{\Delta}^A$ such that $C(l^r) \in B_1$ for some integer $r$. 
Let us explain with more detail. 
Notice that 
since the gauge group is abelian, 
in $\h{\bl{\Delta}^A}{U(1)}$ the group $\bl{\Delta}^A$ may be replaced by the abelian group $C(\bl{\Delta}^A)$; that is, there is a natural isomorphism 
$\h{\bl{\Delta}^A}{U(1)} \simeq \h{C(\bl{\Delta}^A)}{U(1)} 
$. 
Also notice 
that for any loop $l \in \bl{\Delta}$, the homology type of its associated $1$-chain 
$[{\rm C} (l) ]_{H_1} \in H_1(M)$ 
depends only on 
$[i_\ast l]_{\pi_1} \in \pi_1\left(M,\star\right)$, and this assignment defines a group homomorphism from the homotopy group to the homology group. 
Since the loops in $\bl{\Delta}^0$ have trivial homotopy type, 
${\rm C}(\bl{\Delta}^0)$ is composed by boundaries; in fact, it is easy to see that every boundary is the image under ${\rm C}$ of a contractible loop, which means that 
${\rm C}(\bl{\Delta}^0) = B_1$. 
Then, coming back to our case of interest, we see that 
the non trivial behavior comes from the existence of
loops $l \in \bl{\Delta}^A$ such that $C(l^r)=r {\rm C}(l) \in B_1$ for some integer $r$. 
The abelian group of chains $C(\bl{\Delta}^A)\subset C(\bl{\Delta})$ 
is decomposed as 
\[
{\rm C}(\bl{\Delta}^A) = (\mathbb{Z} D_1 \oplus \ldots \oplus \mathbb{Z}D_m) \oplus 
(\mathbb{Z} C_1 \oplus \ldots \oplus \mathbb{Z}C_n) 
\]
where each chain $D_\alpha$ has the property $r_\alpha D_\alpha \in B_1$ and $s D_\alpha \notin B_1$ for \\$0 < s < r_\alpha$, 
and for the chains $C_\alpha$ there is no $r \in \mathbb{N}$ such that $r C_\alpha \in B_1$.%
\footnote{
Considering ${\rm C}(\bl{\Delta}^A) $ as a module, 
the existence of this decomposition is a consequence of basic linear algebra.
} 
Given that we are interested in homomorphisms to an abelian group, the group $\h{\bl{\Delta}^A}{U(1)}$ will itself be factorized. 
Now we state the decomposition lemma. 
\begin{lma}\label{lma:decomposition}
The space of holonomy evaluations $\mathcal{A}/\mathcal{G}_{\Delta, \star}^\mathrm{Hol}\equiv \h{\bl{\Delta}}{U(1)}$ can be factorized as follows: 
\begin{description}
\item[Case 1] 
If $\bl{\Delta} = \bl{\Delta}^0 \bl{\Delta}^A$ and $\bl{\Delta}^0 \cap \bl{\Delta}^A = \{ \id \}$, then 
$$
	\mathcal{A}/\mathcal{G}_{\Delta, \star}^\mathrm{Hol} \simeq 
	\h{\bl{\Delta}^0}{U(1)} \oplus \h{\bl{\Delta}^A}{U(1)} . 
$$
\item[Case 2] 
If $\bl{\Delta} = \bl{\Delta}^0 \bl{\Delta}^A$ and 
$\bl{\Delta}^0 \cap \bl{\Delta}^A$ is a non trivial subgroup of $\bl{\Delta}$, then 
$$
	\mathcal{A}/\mathcal{G}_{\Delta, \star}^\mathrm{Hol} \simeq 
	[\h{\bl{\Delta}^0}{U(1)} \times \h{\bl{\Delta}^A}{U(1)}]_{l \in \bl{\Delta}^0 \cap \bl{\Delta}^A \Rightarrow \varphi^0(l)= \varphi^A(l)} . 
$$
\end{description}
Moreover, we can write 
\[
\h{\bl{\Delta}^A}{U(1)} \simeq 
[\oplus_{\alpha=1}^m \h{\mathbb{Z} D_\alpha}{U(1)} ] \oplus [ \oplus_{\alpha=1}^n \h{\mathbb{Z} C_\alpha}{U(1)} ]
\]
and the compatibility conditions $l \in \bl{\Delta}^0 \cap \bl{\Delta}^A \Rightarrow \varphi^0(l)= \varphi^A(l)$ are written as 
\[
\varphi^0(r_\alpha D_\alpha) = [ \varphi^{D}(D_\alpha)]^{r_\alpha} \, \, \, \, \, \forall \, \alpha \in \{ 1, \ldots ,m \}. 
\]
The proof gives the isomorphisms explicitly. 
\end{lma}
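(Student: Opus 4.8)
The plan is to collapse the whole problem onto the abelian, chain-level picture, where the noncommutativity of $\bl{\Delta}$ evaporates. Since $U(1)$ is abelian, every homomorphism out of a group factors through its abelianization; combining this with Lemma~\ref{lma:0} and the fact that the chain map ${\rm C}$ is exactly the Hurewicz/abelianization map $\bl{\Delta}\to\bl{\Delta}^{\mathrm{ab}}\simeq Z_1$ (for a graph $H_1$ equals the cycle group $Z_1$), I would first record the natural isomorphism $\h{\bl{\Delta}}{U(1)}\simeq\h{Z_1}{U(1)}$. Restricting along ${\rm C}$, and using ${\rm C}(\bl{\Delta}^0)=B_1$ together with the stated identification $\h{\bl{\Delta}^A}{U(1)}\simeq\h{{\rm C}(\bl{\Delta}^A)}{U(1)}$ (and its analogue $\h{\bl{\Delta}^0}{U(1)}\simeq\h{B_1}{U(1)}$), I identify the two factors with $\h{B_1}{U(1)}$ and $\h{L^A}{U(1)}$, writing $L^A:={\rm C}(\bl{\Delta}^A)$. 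The hypothesis $\bl{\Delta}=\bl{\Delta}^0\bl{\Delta}^A$ then gives $Z_1=B_1+L^A$, because ${\rm C}$ is surjective and ${\rm C}(l^0l^A)={\rm C}(l^0)+{\rm C}(l^A)$.

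With this reduction both cases come from a single exact sequence of abelian groups. Setting $K:=B_1\cap L^A$, the addition map fits into
\begin{equation*}
0\longrightarrow K\xrightarrow{\ k\mapsto(k,-k)\ }B_1\oplus L^A\xrightarrow{\ (b,l)\mapsto b+l\ }Z_1\longrightarrow 0,
\end{equation*}
whose surjectivity is $Z_1=B_1+L^A$ and whose kernel is precisely $K$. Because $U(1)$ is divisible, hence an injective $\mathbb{Z}$-module, $\h{-}{U(1)}$ is exact, so applying it gives
\begin{equation*}
0\longrightarrow\h{Z_1}{U(1)}\longrightarrow\h{B_1}{U(1)}\oplus\h{L^A}{U(1)}\xrightarrow{\ \rho\ }\h{K}{U(1)}\longrightarrow 0,
\end{equation*}
with $\rho(\varphi^0,\varphi^A)=(\varphi^0|_K)(\varphi^A|_K)^{-1}$. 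Thus $\h{\bl{\Delta}}{U(1)}$ is exactly the subgroup of pairs agreeing on $K$. In Case~1 the hypothesis $\bl{\Delta}^0\cap\bl{\Delta}^A=\{\id\}$ forces $K=0$, the sequence splits as the direct sum $\h{\bl{\Delta}^0}{U(1)}\oplus\h{\bl{\Delta}^A}{U(1)}$, and no semidirect-product gluing obstruction survives: any $\varphi^0$ factors through ${\rm C}$, on which conjugation acts trivially (${\rm C}(glg^{-1})={\rm C}(l)$ in the abelian $Z_1$), so $\varphi^0$ is automatically $\bl{\Delta}^A$-invariant. In Case~2 one keeps the compatibility $\varphi^0|_K=\varphi^A|_K$, which rewrites as $l\in\bl{\Delta}^0\cap\bl{\Delta}^A\Rightarrow\varphi^0(l)=\varphi^A(l)$ once ${\rm C}(\bl{\Delta}^0\cap\bl{\Delta}^A)$ is seen to generate $K$.

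For the explicit form of the compatibility conditions I would feed in the decomposition $L^A=(\mathbb{Z}D_1\oplus\cdots\oplus\mathbb{Z}D_m)\oplus(\mathbb{Z}C_1\oplus\cdots\oplus\mathbb{Z}C_n)$. Dualizing the direct sum yields $\h{\bl{\Delta}^A}{U(1)}\simeq[\oplus_\alpha\h{\mathbb{Z}D_\alpha}{U(1)}]\oplus[\oplus_\alpha\h{\mathbb{Z}C_\alpha}{U(1)}]$. The decisive computation is the identification of $K$: since $[C_\alpha]$ has infinite order while $[D_\alpha]$ has order $r_\alpha$ in $H_1(M)=Z_1/B_1$, an element $\sum a_\alpha D_\alpha+\sum b_\alpha C_\alpha$ of $L^A$ lies in $B_1$ iff every $b_\alpha=0$ and $r_\alpha\mid a_\alpha$, so $K=\oplus_\alpha\mathbb{Z}\,(r_\alpha D_\alpha)$. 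Evaluating $\varphi^0|_K=\varphi^A|_K$ on each generator $r_\alpha D_\alpha$ then reads $\varphi^0(r_\alpha D_\alpha)=\varphi^A(D_\alpha)^{r_\alpha}=[\varphi^D(D_\alpha)]^{r_\alpha}$, exactly as claimed, and every isomorphism above is given explicitly.

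I expect the main obstacle to be the careful justification of the chain-level identifications, and in particular $\h{\bl{\Delta}^0}{U(1)}\simeq\h{B_1}{U(1)}$, which amounts to showing that every character of $\bl{\Delta}^0$ factors through ${\rm C}|_{\bl{\Delta}^0}$ (equivalently that $(\bl{\Delta}^0)^{\mathrm{ab}}\to B_1$ is an isomorphism, where in general only a coinvariant quotient is controlled by the five-term exact sequence of $1\to\bl{\Delta}^0\to\bl{\Delta}\to\pi_1(M)\to1$). Closely tied to this is the need to match the group-theoretic dichotomy $\bl{\Delta}^0\cap\bl{\Delta}^A=\{\id\}$ versus nontrivial with the chain-level dichotomy $K=0$ versus $K\neq0$ (equivalently $m=0$ versus $m\geq1$). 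Once these identifications are secured, the triviality of conjugation on $Z_1$ removes the only genuinely nonabelian difficulty, and the remainder is just exactness of $\h{-}{U(1)}$ applied to the Mayer--Vietoris-type sequence.
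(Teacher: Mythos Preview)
Your approach is correct and takes a genuinely different route from the paper's. The paper works bare-hands: it defines the restriction $j_A:\varphi\mapsto(\varphi|_{\bl{\Delta}^0},\varphi|_{\bl{\Delta}^A})$ and exhibits an explicit inverse $[j_A^{-1}(\varphi^0,\varphi^A)](l^0l^A)=\varphi^0(l^0)\,\varphi^A(l^A)$, checking in Case~2 by a short direct computation that this value is independent of the factorization of $l$ precisely when the compatibility condition holds; the finer splitting of $\h{\bl{\Delta}^A}{U(1)}$ is then read off from the given decomposition of ${\rm C}(\bl{\Delta}^A)$. You instead abelianize once to $Z_1$, apply the exact functor $\h{-}{U(1)}$ to the addition sequence $0\to K\to B_1\oplus L^A\to Z_1\to 0$, and compute $K=\oplus_\alpha\mathbb{Z}(r_\alpha D_\alpha)$ from the torsion orders in $H_1(M)$. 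Your route is more structural and makes the explicit compatibility conditions drop out of the identification of $K$; the paper's is more elementary and avoids invoking divisibility/injectivity of $U(1)$.

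The obstacles you flag are genuine and in fact coincide with what the paper's own argument leaves implicit. For the paper's $j_A^{-1}$ to land in $\h{\bl{\Delta}}{U(1)}$ (i.e.\ to be a group homomorphism, not merely a set map on the semidirect product), every $\varphi^0\in\h{\bl{\Delta}^0}{U(1)}$ must be invariant under $\bl{\Delta}^A$-conjugation; this is exactly your question of whether $\h{\bl{\Delta}^0}{U(1)}\simeq\h{B_1}{U(1)}$. Likewise, matching the group-level dichotomy $\bl{\Delta}^0\cap\bl{\Delta}^A$ trivial versus nontrivial with the chain-level dichotomy $K=0$ versus $K\neq 0$ is a real point that neither argument spells out. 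So your proof and the paper's diverge in style but share the same unverified step; your version has the merit of isolating it explicitly.
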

\begin{proof}
{\em Case 1}\\
Since $U(1)$ is abelian $\h{\bl{\Delta}}{U(1)}$, 
$\h{\bl{\Delta}^0}{U(1)}$ and $\h{\bl{\Delta}^A}{U(1)}$ are groups.

It is simple to verify that $\h{\bl{\Delta}^0}{U(1)} \times \{ \id \}$ 
and $\{ \id \}\times \h{\bl{\Delta}^A}{U(1)}$ 
can be embedded as abelian subgroups of 
$\h{\bl{\Delta}}{U(1)}$. 

Consider the following group homomorphism $j_A : \h{\bl{\Delta}}{U(1)} \to 
\h{\bl{\Delta}^0}{U(1)} \oplus \h{\bl{\Delta}^A}{U(1)}$,
\[
\varphi \mapsto (\varphi|_{\bl{\Delta}^0} , \varphi|_{\bl{\Delta}^A}) . 
\]
In order to prove that it is an isomorphism, simply notice that its inverse is defined by 
\[
[ j_A^{-1}(\varphi^0, \varphi^A)](l) = [ j_A^{-1}(\varphi^0, \varphi^A)] (l^0 l^A) = 
\varphi^0 (l^0)  \varphi^A (l^A) . 
\]

{\em Case 2}\\
In this case the injective group homomorphism $j_A$ defined above is not onto; its image has the property that  
\[
l \in \bl{\Delta}^0 \cap \bl{\Delta}^A \implies \varphi^0 (l) = \varphi^A (l) . 
\]
We will call this relation the {\bf compatibility} condition. 
It defines a subgroup of 
$\h{\bl{\Delta}^0}{U(1)} \times \h{\bl{\Delta}^A}{U(1)}$. 
Now we will see that 
$j_A : \h{\bl{\Delta}}{U(1)} \to \h{\bl{\Delta}^0}{U(1)} \times \h{\bl{\Delta}^A}{U(1)}|_{\rm \bf compat.}$ 
is invertible. 
Consider $(\varphi^0, \varphi^A) \in \h{\bl{\Delta}^0}{U(1)} \times \h{\bl{\Delta}^A}{U(1)}|_{\rm \bf compat.}$, 
and two different factorizations of a loop $l = l^0_1 l^A_1=l^0_2 l^A_2$. Then 
\begin{eqnarray*}
\varphi^0 (l^0_1)  \varphi^A (l^A_1) &=& 
\varphi^0 ((l^0_2 l^A_2)(l^0_1 l^A_1)^{-1}) 
\varphi^0 (l^0_1)  \varphi^A (l^A_1) \\
&=&
\varphi^0 (l^0_2) \varphi^A (l^A_2 {l^A_1}^{-1}) \varphi^0 ({l^0_1}^{-1}) \varphi^0 (l^0_1)  \varphi^A (l^A_1) \\
&=& 
\varphi^0 (l^0_2)  \varphi^A (l^A_2) . 
\end{eqnarray*}
Thus, whenever $(\varphi^0, \varphi^A) \in \h{\bl{\Delta}^0}{U(1)} \times \h{\bl{\Delta}^A}{U(1)}|_{\rm \bf compat.}$ we can define 
$j_A^{-1}(\varphi^0, \varphi^A) \in \h{\bl{\Delta}}{U(1)}$ using any factorization of the loops $l \in \bl{\Delta}$. 
It is simple to verify that $j_A^{-1}$ is in fact the inverse of 
$j_A : \h{\bl{\Delta}}{U(1)} \to \h{\bl{\Delta}^0}{U(1)} \times \h{\bl{\Delta}^A}{U(1)}|_{\rm \bf compat.}$. 

From their definition, it is clear that $j_A$ and its inverse are smooth. 

The finer factorization at the end of the lemma follows from two facts: 
(i) Since $U(1)$ is an abelian group implies that the holonomy evaluation along $l \in \bl{\Delta}$ is sensible only to the corresponding 
chain ${\rm C}(l)$. 
(ii) The decomposition 
$
{\rm C}(\bl{\Delta}^A) = (\oplus_{\alpha=1}^M \mathbb{Z} D_\alpha ) \oplus (\oplus_{\alpha=1}^N \mathbb{Z} C_\alpha) 
$
described just prior to the lemma. 
\end{proof}

\section{Holonomy evaluation and curvature evaluation} \label{sec4}

In the last section we described the connection modulo gauge $[A]\in\AGp$ at measuring scale $(|\Delta|,\phi)$ restricting the set of measurements to 
the available holonomies. 
The possible holonomy evaluations at scale $(|\Delta|,\phi)$ are stored in 
$\mathcal{A}/\mathcal{G}_{\Delta, \star}^\mathrm{Hol}$. However, as explained in the introduction, this description does not let us evaluate, even approximately, the curvature of $[A]$ on a simplicial surface contained in $\phi({\rm Sd}\,|\Delta|^{(2)})$. This motivates further considerations described below.

Let us consider the abelian group $G=U(1)$, whose one dimensional Lie algebra ${\rm Lie}(U(1))$ can be identified with $\mathbb{R}$. We consider a smooth principal $U(1)$-bundle $E$ over a compact oriented manifold $M$, with Euler class $\mathbf{e}\in H^2\left(M,\mathbb{Z}\right)$.  Recall that for each smooth connection $[A]\in\AGp$, the integral of the curvature on an oriented $2$-simplex is
$$
	\omega^A(\sigma):=\frac{1}{2\pi}\int_\sigma F^A,
$$
where, $F^A=dA$ is a smooth real valued $2$-form on $M$. This function $\omega^A:({\rm Sd}\,|\Delta|^2,\phi)\rightarrow\mathbb{R}$, can be extended 
to become a $(2, \Delta)$-cochain 
as
$$
	\omega^A\left(\sum_{k=1}^{N_{2}}r_k\sigma_k\right):=\sum_{k=1}^{N_{2}}r_k\omega^A(\sigma_k)
$$
where $\sigma_k\in ({\rm Sd}\,|\Delta|^2,\phi)$ and $r_k\in\mathbb{Z}$. 

Consider a $(2, \Delta)$-cochain $\omega$ with associated map $\omega:({\rm Sd}\,|\Delta|^2,\phi)\rightarrow\mathbb{R}$. We say that $\omega$ is \emph{closed} if for every $3$-simplex $\tau\in ({\rm Sd}\,|\Delta|^3,\phi)$ with $\partial\tau=\sigma^1+\sigma^2+\sigma^3+\sigma^4$ we have 
$$
d\omega(\tau):=\omega(\partial\tau)=
\omega(\sigma^1) + \omega(\sigma^2) + \omega(\sigma^3) + \omega(\sigma^4)
=0 . 
$$
Clearly, the curvature $(2, \Delta)$-cochains $\omega^A$ are closed, $d\omega^A=0$.

Let $S$ be an oriented compact simplicial surface $S\subset \phi({\rm Sd}\,|\Delta|^{(2)})$ such that $\partial S=\emptyset$. 
For a closed $(2, \Delta)$-cochain $\omega$, the evaluation $\omega(S)$ depends only on the homology class $[S]$. We may also define the cohomology class $[\omega]\in H^2(M,\mathbb{R}).$ 
Therefore, when we evaluate the $(2, \Delta)$-cochain $\omega^A$ on $S$, a simplicial oriented surface with empty boundary, we obtain
$$\omega^A(S)=\mathbf{e}([S])\in\mathbb{Z},$$
where $\mathbf{e}$ is the Euler class of the principal bundle.

Conversely, given a prefixed Euler class $\mathbf{e}\in H^2(M,\mathbb{Z})$, consider a closed $(2, \Delta)$-cochain $\omega$ such that $\omega(S)=\mathbf{e}([S])$ for every oriented closed surface $S$. Consider the $(2, \Delta)$-cochain $\omega\mid_\tau, d\omega\mid_\tau=0$ defined on a $d$-simplex $\tau\in({\rm Sd}\,|\Delta|^d,\phi)$; by a discrete Poincar\'e Lemma (see a more general description in \cite{DLM}), there exists a $1$-cochain $\gf_\tau$ defined on $\tau \subset M$ such that
$$d\gf_\tau=\omega\mid_\tau.$$
The set of $(1,\Delta)$-cochains $\gf_\tau$ may be thougth of as local connection discrete $1$-forms whose local curvature discrete $2$-forms are the corresponding $\omega\mid_\tau.$

So far we have motivated the definition of the set of {\em discrete $2$-forms} encoding the curvature evaluation for effective theories at a given scale $(|\Delta|,\phi)$. 
The confirmation that it provides a useful characterization of the space of curvature evaluations at a given scale is the content of Lemma \ref{lma:maps} below. 
\begin{dfn}\label{dfn:Omega}  The space of \emph{curvature evaluations} on an abelian principal bundle $(E,p,M)$ is the space of real valued $(2, \Delta)$-cochains 
$\omega\in\Omega_{\Delta,p}$ 
such that
\begin{enumerate}
	\item (Bianchi/closure) The cochain $\omega$ is closed; in other words, $d\omega(\tau)=\omega(\partial\tau)=0$ for all $3$-simplices $\tau \in ({\rm Sd}\,|\Delta|^3,\phi)$. 
	\item For each simplicial surface with $\partial S=\emptyset$ we have 
	$\omega(S)=\mathbf{e}([S])\in\mathbb{Z}$, where $\mathbf{e}$ is the Euler class of the principal bundle $(E,p,M)$.
\end{enumerate}
$\Omega_{\Delta,p}$ is an affine subspace of 
$\mathbb{R}^{N_{2}}$, where the components are in correspondence with the 
set $({\rm Sd}\,|\Delta|^2,\phi)=\{\sigma^1,\dots,\sigma^{N_2}\}$.
Its underlying vector space is the vector subspace 
$$
\Omega_{\Delta}^0\subset\mathbb{R}^{N_2}
$$ 
defined by the Bianchi/closure condition stated above, and the condition $\omega(S)=0$ for every simplicial surface with empty boundary. 
Notice that the underlying vector space does not depend on the bundle. 

We also define the space 
$$
\check{\Omega}_\Delta = \cup_p \Omega_{\Delta,p} ,
$$
which is the natural space to consider if we want to use curvature evaluation data to determine the bundle structure \cite{DZ}. It is clear that 
$\check{\Omega}_\Delta \subset\mathbb{R}^{N_2}$ is composed by 
strata homeomorphic to $\Omega_{\Delta}^0$ labeled by 
$\mathbf{e}\in H^2(M,\mathbb{Z})$. 
\end{dfn}

We conclude this section by showing how the holonomy evaluations of abelian connections for contractible loops contained in $\phi({\rm Sd}\,|\Delta|^{(1)})$ can be obtained from the curvature evaluations on simplicial surfaces $S$ contained in $\phi({\rm Sd}\,|\Delta|^{(2)})$. 
For the whole description of the holonomy evaluation along contractible and not contractive loops at a given scale, in addition to the curvature we also need the holonomy evaluation along non contractible loops. 
Roughly speaking, this description of the holonomy evaluation is given by the covering map 
$
\mathfrak{e}_\Delta : 
\Omega_{\Delta,p}\times\ \h{\bl{\Delta}^A}{U(1)}|_{\rm \bf compat.}
\rightarrow\mathcal{A}/\mathcal{G}_{\Delta, \star}^\mathrm{Hol}
$
described in detail in Lemma \ref{lma:maps} below. 
The space $\Omega_{\Delta,p}\times\ \h{\bl{\Delta}^A}{U(1)}|_{\rm \bf compat.}$ 
will be referred as {\em the extended space of curvature evaluations}. 

\begin{lma}\label{lma:maps}
1) There is an isomorphism of vector spaces 
$$
I_\Delta : \Omega_{\Delta}^0 \to \h{\bl{\Delta}^0}{\mathbb R} 
= {\rm Lie}\left[ \h{\bl{\Delta}^0}{U(1)} \right] 
$$
defined by 
$I_\Delta (\omega^0)= \frac{d}{dt} (c_{\vec{\omega}}(t))_{t=0}$ for the curve in $\h{\bl{\Delta}^0}{U(1)}$ 
determined by  
$
c_{\vec{\omega}}(t) [\gamma^0] = \exp\left(2\pi i t \vec{\omega}(S^0)\right)\in U(1)  
$, where ${\rm C} (\gamma^0) = \partial S^0$. 

2) The extended space of curvature evaluations covers 
$\mathcal{A}/\mathcal{G}_{\Delta, \star}^\mathrm{Hol}$. \\
After fixing a base point $\omega^0 \in \Omega_{\Delta,p}$, 
we can use the exponential map \\
$\exp_\Delta: \h{\bl{\Delta}^0}{\mathbb R} = 
{\rm Lie}\left[ \h{\bl{\Delta}^0}{U(1)} \right] \to \h{\bl{\Delta}^0}{U(1)}$, \\
and the decomposition \\
$j_A: 
\mathcal{A}/\mathcal{G}_{\Delta, \star}^\mathrm{Hol} \to 
	\h{\bl{\Delta}^0}{U(1)}\oplus \h{\bl{\Delta}^A}{U(1)} 
$ 
of Lemma \ref{lma:decomposition} \\
to obtain the following covering maps: 
\begin{itemize}
	\item 
$
	e_\Delta^0 = \exp_\Delta \circ I_\Delta \circ (-\omega^0)
	:\Omega_{\Delta,p} \rightarrow\h{\bl{\Delta}^0}{U(1)}
$
	\item
\begin{description}
\item[Case 1] 
If $\bl{\Delta} = \bl{\Delta}^0 \bl{\Delta}^A$ and $\bl{\Delta}^0 \cap \bl{\Delta}^A = \{ \id \}$, then 
$$
\mathfrak{e}_\Delta = j_A^{-1} \circ (e_\Delta^0 \times \id)  
:\Omega_{\Delta,p}\times\ \h{\bl{\Delta}^A}{U(1)} 
\rightarrow\mathcal{A}/\mathcal{G}_{\Delta, \star}^\mathrm{Hol} . 
$$
\item[Case 2] 
If $\bl{\Delta} = \bl{\Delta}^0 \bl{\Delta}^A$ and 
$\bl{\Delta}^0 \cap \bl{\Delta}^A$ is a non trivial subgroup of $\bl{\Delta}$, then the cover of 
$\mathcal{A}/\mathcal{G}_{\Delta, \star}^\mathrm{Hol}$ is the map 
$$
\mathfrak{e}_\Delta = j_A^{-1} \circ (e_\Delta^0 \times \exp_D \times \id) 
$$
with domain 
$$
(\Omega_{\Delta,p}
\times\ 
[\oplus_{\alpha=1}^m \h{\mathbb{Z} D_\alpha}{\mathbb R} ])|_{\rm \bf compat.}   \oplus [ \oplus_{\alpha=1}^n \h{\mathbb{Z} C_\alpha}{U(1)} ].
$$
We will use the notation $\exp_D(\tilde{\varphi}^D) = \varphi^D$ for the map \\
$\exp_D: \oplus_{\alpha=1}^m \h{\mathbb{Z} D_\alpha}{\mathbb R} \to \oplus_{\alpha=1}^m \h{\mathbb{Z} D_\alpha}{U(1)}$. \\
In this notation the compatibility conditions become 
$$
\omega(S(r_\alpha D_\alpha)) = \tilde{\varphi}^D(r_\alpha D_\alpha) 
$$
for every surface $S(r_\alpha D_\alpha)$ such that $\partial S(r_\alpha D_\alpha)= r_\alpha D_\alpha$. \\
This factorized form of the compatibility conditions has the property that the natural projection 
$$
(\Omega_{\Delta,p}
\times\ 
[\oplus_{\alpha=1}^m \h{\mathbb{Z} D_\alpha}{\mathbb R} ])|_{\rm \bf compat.}   
\to 
\Omega_{\Delta,p}
$$
is a diffeomorphism. 
\end{description}
\end{itemize}
3) The group $\Omega_\Delta^0\times U(1)^n$ acts on 
the extended space of curvature evaluations making it a homogeneous space. 
\end{lma}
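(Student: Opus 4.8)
The plan is to dispatch the three assertions in turn, each of which reduces to a clean algebraic identification once one exploits the chain map ${\rm C}$ and the defining property that elements of $\Omega_\Delta^0$ vanish on closed surfaces.

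For part (1) I would first check that $I_\Delta$ is well defined. Given $\omega^0\in\Omega_\Delta^0$ and a contractible class $\gamma^0\in\bl{\Delta}^0$, its chain ${\rm C}(\gamma^0)$ lies in $B_1$ (recall ${\rm C}(\bl{\Delta}^0)=B_1$), so there is a $2$-chain $S^0$ with $\partial S^0={\rm C}(\gamma^0)$; any two such choices differ by a $2$-cycle, i.e. a closed surface, on which $\omega^0$ vanishes by definition of $\Omega_\Delta^0$, so $\omega^0(S^0)$ is independent of the choice. Additivity of ${\rm C}$ then makes $\gamma^0\mapsto\omega^0(S^0)$ a homomorphism to $(\mathbb{R},+)$, so $I_\Delta(\omega^0)$ is a well-defined element of $\h{\bl{\Delta}^0}{\mathbb{R}}$, and the derivative description in the statement is merely its exponentiated form. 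The cleanest route to the isomorphism is to observe that an element of $\Omega_\Delta^0$ is exactly a linear functional on $2$-chains vanishing on $2$-cycles, hence factors through $\partial$ to give a functional on $B_1$, so $\Omega_\Delta^0\simeq\h{B_1}{\mathbb{R}}$; on the other side, since $\mathbb{R}$ is abelian and ${\rm C}(\bl{\Delta}^0)=B_1$, we have $\h{\bl{\Delta}^0}{\mathbb{R}}\simeq\h{B_1}{\mathbb{R}}$. The map $I_\Delta$ is precisely the composite of these two identifications, hence a linear isomorphism.

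For part (2) I would assemble the covering maps from elementary pieces. The map $e_\Delta^0$ is the composite of the affine translation $\omega\mapsto\omega-\omega^0$ (a diffeomorphism $\Omega_{\Delta,p}\to\Omega_\Delta^0$ once the base point is fixed), the linear isomorphism $I_\Delta$, and $\exp_\Delta$. Since $B_1$ is free abelian, $\h{\bl{\Delta}^0}{U(1)}$ is a torus and $\exp_\Delta$ is its universal covering; composing with a diffeomorphism and a linear isomorphism preserves this, so $e_\Delta^0$ is a covering. In \textbf{Case 1} Lemma \ref{lma:decomposition} gives $j_A$ as an isomorphism onto $\h{\bl{\Delta}^0}{U(1)}\oplus\h{\bl{\Delta}^A}{U(1)}$, so $\mathfrak{e}_\Delta=j_A^{-1}\circ(e_\Delta^0\times\id)$ is a covering, being a covering times the identity followed by an isomorphism. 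In \textbf{Case 2} the only extra work is the compatibility constraint: using $\h{\bl{\Delta}^A}{U(1)}\simeq[\oplus_\alpha\h{\mathbb{Z}D_\alpha}{U(1)}]\oplus[\oplus_\alpha\h{\mathbb{Z}C_\alpha}{U(1)}]$, the condition $\varphi^0(r_\alpha D_\alpha)=[\varphi^D(D_\alpha)]^{r_\alpha}$ becomes, after fixing a surface $S(r_\alpha D_\alpha)$ for each $\alpha$ and passing to the real space, $\omega(S(r_\alpha D_\alpha))=\tilde{\varphi}^D(r_\alpha D_\alpha)$; since this forces $\tilde{\varphi}^D(D_\alpha)=\omega(S(r_\alpha D_\alpha))/r_\alpha$, it recovers $\tilde{\varphi}^D$ uniquely and smoothly from $\omega$, yielding the asserted diffeomorphism $(\Omega_{\Delta,p}\times[\oplus_\alpha\h{\mathbb{Z}D_\alpha}{\mathbb{R}}])|_{\rm \bf compat.}\to\Omega_{\Delta,p}$. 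The remaining factors carry $\exp_D$ (a covering on each $D_\alpha$) and the identity on the $C_\alpha$ factors, so $e_\Delta^0\times\exp_D\times\id$ restricts to a covering of $[\h{\bl{\Delta}^0}{U(1)}\times\h{\bl{\Delta}^A}{U(1)}]|_{\rm \bf compat.}$, and $\mathfrak{e}_\Delta=j_A^{-1}\circ(e_\Delta^0\times\exp_D\times\id)$ is again a covering.

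For part (3) I would exhibit the action directly. The vector space $\Omega_\Delta^0$ acts by translation on the affine space $\Omega_{\Delta,p}$, simply transitively, and since the $\tilde{\varphi}^D$ factor is determined by $\omega$ through the diffeomorphism of part (2), this already acts transitively on the entire $(\Omega_{\Delta,p}\times[\oplus_\alpha\h{\mathbb{Z}D_\alpha}{\mathbb{R}}])|_{\rm \bf compat.}$ piece; it respects the compatibility condition because any $\eta\in\Omega_\Delta^0$ vanishes on closed surfaces, so $\eta(S(r_\alpha D_\alpha))$ is well defined. The factor $U(1)^n\simeq\oplus_\alpha\h{\mathbb{Z}C_\alpha}{U(1)}$ acts by translation on the corresponding factor, simply transitively. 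In Case 1 one notes $m=0$ (there are no $D_\alpha$, since $\bl{\Delta}^0\cap\bl{\Delta}^A=\{\id\}$ forbids non-contractible loops with $r{\rm C}(l)\in B_1$), so $\h{\bl{\Delta}^A}{U(1)}\simeq U(1)^n$ and the same description applies. Hence $\Omega_\Delta^0\times U(1)^n$ acts simply transitively on the extended space, which is therefore homogeneous. The routine ingredients here are linear-algebraic; I expect the main obstacle to be the bookkeeping of $\mathbb{Z}$-ambiguities in Case 2, where two surfaces with boundary $r_\alpha D_\alpha$ differ by a closed surface on which $\omega$ reads the integer $\mathbf{e}([\cdot])$, and one must verify that these ambiguities are absorbed exactly by the kernels of $\exp_\Delta$ and $\exp_D$, so that the compatibility conditions and the deck group of $\mathfrak{e}_\Delta$ come out as claimed.
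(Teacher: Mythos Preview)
Your proof is correct and tracks the paper's argument closely; the one substantive difference is in part~(1). The paper establishes that $I_\Delta$ is an isomorphism by writing down an explicit inverse $J_\Delta$, namely $J_\Delta(c)[S]=\frac{1}{2\pi i}\frac{d}{dt}(c(t)[\partial S])|_{t=0}$, and then checking $J_\Delta\circ I_\Delta=\id$ and $I_\Delta\circ J_\Delta=\id$. You instead identify both sides with $\h{B_1}{\mathbb{R}}$: on one side, an element of $\Omega_\Delta^0$ is a functional on $2$-chains vanishing on $2$-cycles, hence factors through $\partial$; on the other, $\h{\bl{\Delta}^0}{\mathbb{R}}$ factors through the abelianization ${\rm C}(\bl{\Delta}^0)=B_1$. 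Your route is a bit more conceptual and avoids verifying the two compositions by hand; the paper's explicit $J_\Delta$ has the virtue of giving the inverse concretely, which is convenient later when one wants formulas. For parts~(2) and~(3) your argument is essentially the paper's, though you supply more detail (e.g.\ why $e_\Delta^0$ is a covering, and why the compatibility constraint determines $\tilde{\varphi}^D$ smoothly from $\omega$), and you correctly flag the integer ambiguity in $\omega(S(r_\alpha D_\alpha))$ for nontrivial Euler class as the point requiring care---a subtlety the paper passes over in a single sentence.
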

\begin{proof}
First, we shall mention that 
the formula 
$
c_{\vec{\omega}}(t) [\gamma^0] = \exp\left(2\pi i t \vec{\omega}(S^0)\right)\in U(1)  
$
does not depend on the choice of surface $S^0$ with prescribed boundary 
${\rm C} (\gamma^0)$ due to the ``Bianchi/closure'' condition obeyed by 
$\vec{\omega}$. 

It is also simple to verify that $c_{\vec{\omega}}(t)$ 
defines an element of $\h{\bl{\Delta}^0}{U(1)}$ for each value of the parameter $t$.

In order to prove that $I_\Delta$ is an isomorphism we exhibit its inverse. 
An element of $ \h{\bl{\Delta}^0}{\mathbb R} = {\rm Lie}( \h{\bl{\Delta}^0}{U(1)} )$ can be viewed as 
a group homomorphism $c : \mathbb{R} \to \h{\bl{\Delta}^0}{U(1)}$; 
we define 
$$
\vec{\omega}_c (S) := \frac{1}{2\pi i} \frac{d}{dt} (c(t) [\partial S])_{t=0} . 
$$

Since, for each value of the parameter, $c(t)$ belongs to $\h{\bl{\Delta}^0}{U(1)}$, the formulas given above are compatible with the 
``Bianchi/closure conditions'' and with the conditions of $\omega(S)=0$ for every simplicial closed surface. 
Thus, we define 
$J_\Delta : \h{\bl{\Delta}^0}{\mathbb R} = 
{\rm Lie}( \h{\bl{\Delta}^0}{U(1)} ) \to \Omega_{\Delta}^0$ as 
$$J_\Delta(c) [S] := \vec{\omega}_c(S).$$ 
It is easy to verify that $J_\Delta \circ I_\Delta = \id$ and that $I_\Delta \circ J_\Delta  = \id$. 
This completes the proof of the first part of the lemma. 

It is clear that the compatibility conditions can be 
solved considering $\omega$ as data and $\tilde{\varphi}^D$ as unknown, which means that the projection 
\[
(\Omega_{\Delta,p}
\times\ 
[\oplus_{\alpha=1}^m \h{\mathbb{Z} D_\alpha}{\mathbb R} ])|_{\rm \bf compat.}   
\to 
\Omega_{\Delta,p}
\]
is onto and invertible. 

In {\em Case 1} the $\Omega_\Delta^0\times U(1)^n$ action on 
$\Omega_{\Delta,p}\times\ \h{\bl{\Delta}^A}{U(1)} $ is clear. 
In {\em Case 2} the $\Omega_\Delta^0$ action on 
$(\Omega_{\Delta,p}
\times\ 
[\oplus_{\alpha=1}^m \h{\mathbb{Z} D_\alpha}{\mathbb R} ])|_{\rm \bf compat.}
$
is the one inherited from the action on $\Omega_{\Delta,p}$. 
One can easily verify that, in both cases, the action is transitive. 
\end{proof}


\section{Coarse graining maps and the continuum limit} \label{sec3}

In previous sections we described the holonomy and curvature evaluations at a fixed scale $(|\Delta|,\phi)$. Now we consider 
a sequence of scales, $(|\Delta_i|,\phi_i),$ with $i\in\mathbb{N}$, the relation among them, and a notion of continuum limit.

\begin{dfn}\label{dfn:scales}
We consider a sequence of smooth triangulations $\phi_i:|\Delta_i|\rightarrow M$, $i\in\mathbb{N}$. We say that the sequence 
$$\mathscr{E}=\{(|\Delta_i|,\phi_i)\,\mid\,i\in\mathbb{N}\}$$ 
is a \emph{sequence of scales} for $M$, if the following conditions are satisfied:
\begin{enumerate}
\item The triangulation $(|\Delta_j|,\phi_j)$ is a rectilinear sub-triangulation of $(|\Delta_i|,\phi_i)$ for every $j>i$, i.e. for any simplex $|\tau|\subset|\Delta_i|$ the simplex $\phi_j^{-1}\circ \phi_i(|\sigma|)$ is contained in a simplex of $|\Delta_j|$, with affine inclusions $\phi_i^{-1}\circ \phi_j\mid_{|\tau|}$.
\item If we consider barycentric sub-triangulations ${\rm Sd}\,|\Delta_i|,$ of $|\Delta_i|$, the barycentric sub-triangulation ${\rm Sd}\,|\Delta_j|$ at scales $(|\Delta_j|,\phi_j), j\geq i$ are sub-triangulations of ${\rm Sd}\,|\Delta_i|$ with inclusions $\phi_j^{-1}\circ \phi_i\mid_{|K_\sigma|}$ that are affine maps on each simplex $|\tau|\subset {\rm Sd}\,|\Delta_j|$.
\item The sequence is exhaustive, i.e. for each open set $U\subset M$ there is a scale $(|\Delta_i|,\phi_i)$ such that there is a $d$-simplex $\sigma\in(|\Delta_i|^d,\phi_i)$, with $\sigma\subset U$, $d=\dim M$.
\end{enumerate}
\end{dfn}

Notice that the second condition implies that every $k$-simplex $\sigma_i$ in $({\rm Sd}\,|\Delta_i|^k,\phi_i) $ is a $k$-chain $\sum_{l=1}^m\sigma_j^l$ of $k$-simplices $\sigma_i^l\in({\rm Sd}\,|\Delta_j|^k,\phi_j)$; we say that a scale $(|\Delta_j|,\phi_j)$ is \emph{finer} than a scale $(|\Delta_i|,\phi_i)$. This relation gives a total ordering for the scales corresponding to the total ordering for $\mathbb{N}$.

\begin{paragraph}{Example.} 
{\em 
Consider a smooth triangulation $\phi:|\Delta|\rightarrow M$. If we define 
$$|\Delta_i|:={\rm Sd}^i|\Delta|$$
note that $|\Delta_{i+1}| = {\rm Sd}\,|\Delta_i|$. Let $h_{i}:{\rm Sd}^{i}|\Delta|\rightarrow {\rm Sd}^{i-1}|\Delta_i|$ denote the barycentric sub-triangulation, and define 
$\phi_i:{\rm Sd}^i|\Delta|\rightarrow M$ as $\phi_i=h_1\circ \dots\circ h_{i}$. 
This is an example of a sequence of scales, 
$$\mathscr{E}=\{(|\Delta_0|,\phi_0), (|\Delta_1|,\phi_1) , \, \ldots  \}. $$
}
\end{paragraph}

Let $\bl{i}$ be defined as in equation (\ref{eqn:p_Delta}), $\widehat{\pr_i}$ as in (\ref{eqn:pr_Delta}), and $\ba_i$ as in (\ref{eqn:ba_Delta}). 
For two scales $(|\Delta_i|,\phi_i),(|\Delta_j|,\phi_j)$, with $i\leq j$, 
define the map $\pr_{ji}=\widehat{\pr_i}\circ \ba_j(\gamma)$,
$$
	\pr_{ji}:\bl{j}\rightarrow\bl{i}.
$$
This map is an epimorphism.

\begin{dfn}
\begin{enumerate}
\item The \emph{coarse graining} of holonomy evaluations from a scale 
$(|\Delta_j|,\phi_j)$ to a coarser one $(|\Delta_i|,\phi_i)$ is 
constructed by considering a section of $\pr_{ji}$. This section is
$\enc_{ij}=\widehat{\pr_j}\circ\ba_i:\bl{i}\rightarrow\bl{{j}}$, 
and the coarse graining map is 
$$
\pi_{ji} = \enc_{ij}^\ast: 
\mathcal{A}/\mathcal{G}_{\star,\Delta_j}^\mathrm{Hol} \to 
\mathcal{A}/\mathcal{G}_{\star,\Delta_i}^\mathrm{Hol} .
$$
\item The \emph{projective limit} is the compact group
$$
	\underleftarrow{\AG}:=
	\varprojlim \mathcal{A}/\mathcal{G}_{\star,\Delta_i}^\mathrm{Hol} . 
$$
This space is closely related to the space of generalized connections of loop quantization $\overline{\AG}$. On the one hand, there is a class of generalized connections called $C$-flat which which form a dense subset of $\overline{\AG}$ 
\cite{MMZ} and certain sequences of measures defined on $\overline{\AG}$ by 
$(\hat{\pr{i}}^\ast)_\ast$ may converge to interesting measures in 
$\overline{\AG}$ \cite{MOWZ}. 
On the other hand, the maps $\bar{i}$ bring generalized connections to any scale in a compatible manner, and thus to $\underleftarrow{\AG}$. 
\item Consider the \emph{coarse graining} of curvature evaluations by the 
surjective affine maps 
$$
\ti{\pi_{ji}}= \ti{\enc}_{ij}^\ast:\Omega_{j,p}\rightarrow\Omega_{i,p} , 
$$ 
which are the dual to the affine inclusions $\ti{\enc}_{ij}$ acting on 
$2$-chains at scale $(|\Delta_i|,\phi_i)$ by 
$$
	\ti{\enc}_{ij}(\sigma)=\sigma^1+\dots+\sigma^r,
$$
where for each $2$-simplex $\sigma\in ({\rm Sd}\,|\Delta_i|^2,\phi_i)$ we associate the $2$-chain $\sigma^1+\dots+\sigma^r,\,\sigma^k\in ({\rm Sd}\,|\Delta_j|^2,\phi_j),$ $\sigma=\cup_{k=1}^r\sigma^k$.

\item We describe the configurations for the curvature evaluations in the continuum limit as the projective limit of affine spaces
$$
	\underleftarrow{\Omega}_p:=\varprojlim\Omega_{i,p}.
$$
It is also an affine space with $\underleftarrow{\Omega}^0:=\varprojlim\Omega_i^0$
as underlying vector space.

\item We also define 
$$
	\underleftarrow{\check{\Omega}}:=\varprojlim\check{\Omega}_i.
$$
\end{enumerate}
\end{dfn}

In order to extend the results of Lemma \ref{lma:decomposition} to the continuum limit, we have to study the relation between coarse graining and the factorization of loops as contractible and non-contractible, and also consider the relation between the coarse graining map and the more subtle factorization given just prior to Lemma \ref{lma:decomposition}. 

We define $\bl{i}^0\subset \bl{i}$ as in Definition \ref{L^0}, and 
$\bl{M}^0 \subset \bl{M}$ as the subgroup of contractible loops. 
We will assume that at the coarsest scale a choice for $\bl{0}^A$ has been made; for the finer scales we define 
$$
\bl{i}^A:= \enc_{0i}(\bl{0}^A) \subset \bl{i} \, , \, 
\bl{M}^A:= \ba_i(\bl{0}^A) \subset \bl{M} \, .
$$ 

Notice that, form our definitions, it is clear that 
$$
\pr_{ji}(\bl{j}^0) = \bl{i}^0 \, , \, 
\enc_{ij}(\bl{i}^0) \subset \bl{j}^0 \, , \, 
\pr_{ji}(\bl{j}^A) = \bl{i}^A \, , \, 
\enc_{ij}(\bl{i}^A) = \bl{j}^A \, . 
$$
Also notice that, since 
$
\pi_1\left(M,\star\right) \simeq \bl{i} / \bl{i}^0
$, 
and $\bl{i}^A$ contains representatives of all $\pi_1(M, \star)$-classes, 
we know that $\bl{i} = \bl{i}^0 \bl{i}^A$. 

We can apply the argument used at scale $i= 0$ in an independent manner at each scale to obtain as a result 
that each abelian group of chains $C(\bl{i}^A)\subset C(\bl{i})$ 
is decomposed as 
\[
{\rm C}(\bl{i}^A) = (\mathbb{Z} D_{i, 1} \oplus \ldots \oplus \mathbb{Z}D_{i, m}) \oplus 
(\mathbb{Z} C_{i, 1} \oplus \ldots \oplus \mathbb{Z}C_{i, n}) 
\]
where each chain $D_{i, \alpha}$ has the property $r_{i, \alpha} D_{i, \alpha} \in B_1$ and $s D_{i, \alpha} \notin B_1$ for \\$0 < s < r_{i, \alpha}$, 
and for the chains $C_{i, \alpha}$ there is no $r \in \mathbb{N}$ such that 
$r C_{i, \alpha} \in B_1$.

The good news are that all these decompositions are not independent; instead, they are essentially the same. More precisely, in the decompositions we can use 
\[
D_{j, \alpha} = \enc_{ij} (D_{i, \alpha}) \, 
\mbox{ with } r_{j, \alpha}= r_{i, \alpha} \,
\mbox{ and } \, C_{j, \alpha} = \enc_{ij} (C_{i, \alpha}) \, .
\]

These properties imply that Lemma \ref{lma:decomposition} is compatible with the coarse graining operation and with the continuum limit. 
Thus, the reader should review Lemma \ref{lma:decomposition} changing the subscript $\Delta$ for $i$ --the subscript that we use to denote a given scale in the sequence; in addition, the lemma also holds in the continuum where $\Delta$ has to be replaced by $M$, and the space being covered is $\underleftarrow{\AG}$ 
which, as we just mentioned, is closely related to the space of generalized connections of loop quantization $\overline{\AG}$. 

An important element which appears in the proof of Lemma \ref{lma:decomposition} is the group homomorphism 
$j_A : \h{\bl{i}}{U(1)} \to 
\h{\bl{i}^0}{U(1)} \oplus \h{\bl{i}^A}{U(1)}$, given by 
$
\varphi \mapsto (\varphi|_{\bl{i}^0} , \varphi|_{\bl{i}^A}) 
$. 
We have one such group homomorphism at each scale; 
they commute with coarse graining, which implies that they induce a similar group homomorphism in the continuum (projective) limit. 

It is worth mentioning that the division into cases of Lemma \ref{lma:decomposition} is maintained by coarse graining. That is, if at any scale $\bl{i}$ satisfies the conditions $\bl{i} = \bl{i}^0 \bl{i}^A$ and $\bl{i}^0 \cap \bl{i}^A = \{ \id \}$ 
(Case 1), then the same condition is satisfied at all the scales; 
we should write 
$\bl{M} = \bl{M}^0 \bl{M}^A$ and $\bl{M}^0 \cap \bl{M}^A = \{ \id \}$. 

In Case 2, 
$\bl{M} = \bl{M}^0 \bl{M}^A$ and $\bl{M}^0 \cap \bl{M}^A$ being a non trivial subgroup of $\bl{M}$, it becomes essential to talk about the compatibility conditions, 
$l \in \bl{M}^0 \cap \bl{M}^A \Rightarrow \varphi^0(l)= \varphi^A(l)$. 
Here we have the important property that if the connection at scale $j$ is brought to coarser scales by the coarse graining map and the connection at scale $j$ satisfies the compatibility condition, then the coarser coarser connection will also satisfy the condition. Moreover, the refined statement of the compatibility conditions in terms of the factorization of $\bl{j}$, 
\[
\varphi^0(r_\alpha D_\alpha) = [ \varphi^{D}(D_\alpha)]^{r_\alpha} \, \, \, \, \, \forall \, \alpha \in \{ 1, \ldots ,M \}. 
\] 
is also preserved by the coarse graining map; this is why we do not bother to write the scale as a subscript. 
%

Now we show that the results of Lemma \ref{lma:maps} 
extend to the continuum limit; we state it as a theorem because it is a central result for the rest of the article. 
\begin{tma}\label{t1}
\begin{enumerate}
	\item The map
	$$
		e^0 = \varprojlim e_i^0:
		\underleftarrow{\Omega}_p\rightarrow\varprojlim \h{\bl{i}^0}{U(1)}
	$$
	is a covering map.
	\item The extended space of curvature evaluations covers 
	$\underleftarrow{\AG}$. 
\begin{description}
\item[Case 1] 
If $\bl{M} = \bl{M}^0 \bl{M}^A$ and $\bl{M}^0 \cap \bl{M}^A = \{ \id \}$, the covering map is 
$$
\mathfrak{e}= j_A^{-1} \circ (e^0 \times \id)  : 
\underleftarrow{\Omega}_p \times \h{\bl{M}^A}{U(1)} 
\rightarrow 
\underleftarrow{\AG}  . 
$$
\item[Case 2] 
If $\bl{M} = \bl{M}^0 \bl{M}^A$ and 
$\bl{M}^0 \cap \bl{M}^A$ is a non trivial subgroup of $\bl{M}$, then the cover of 
$\underleftarrow{\AG}$ is the map 
$$
\mathfrak{e} = j_A^{-1} \circ (e^0 \times \exp_D \times \id) 
$$
with domain 
$$
(\underleftarrow{\Omega}_p
\times 
[\oplus_{\alpha=1}^m \h{\mathbb{Z} D_\alpha}{\mathbb R} ])|_{\rm \bf compat.}   \oplus [ \oplus_{\alpha=1}^n \h{\mathbb{Z} C_\alpha}{U(1)} ].
$$
We will use the notation $\exp_D(\tilde{\varphi}^D) = \varphi^D$ for the map \\
$\exp_D: \oplus_{\alpha=1}^m \h{\mathbb{Z} D_\alpha}{\mathbb R} \to \oplus_{\alpha=1}^m \h{\mathbb{Z} D_\alpha}{U(1)}$. \\
In this notation the compatibility conditions become 
$$
\omega(S(r_\alpha D_\alpha)) = \tilde{\varphi}^D(r_\alpha D_\alpha) 
$$
for every surface $S(r_\alpha D_\alpha)$ such that $\partial S(r_\alpha D_\alpha)= r_\alpha D_\alpha$. \\
This factorized form of the compatibility conditions has the property that the natural projection 
$$
(\underleftarrow{\Omega}_p
\times 
[\oplus_{\alpha=1}^m \h{\mathbb{Z} D_\alpha}{\mathbb R} ])|_{\rm \bf compat.}     
\to 
\underleftarrow{\Omega}_p
$$
is a diffeomorphism. 
\end{description}
\end{enumerate}
The group $\underleftarrow{\Omega}^0\times U(1)^n$ acts on 
the extended space of curvature evaluations 
making it a homogeneous space. 
\end{tma}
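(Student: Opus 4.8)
The plan is to obtain Theorem \ref{t1} from the finite-scale Lemma \ref{lma:maps} by passing to the projective limit, exploiting that every ingredient of that lemma commutes with coarse graining, as was checked in the paragraphs preceding the statement. Concretely, the vector-space isomorphisms $I_i$, the homomorphisms $j_A$, the exponential maps $e_i^0$ and $\exp_D$, and the chain decompositions ${\rm C}(\bl{i}^A)=(\oplus_\alpha\mathbb{Z}D_{i,\alpha})\oplus(\oplus_\alpha\mathbb{Z}C_{i,\alpha})$ with $D_{j,\alpha}=\enc_{ij}(D_{i,\alpha})$ and $C_{j,\alpha}=\enc_{ij}(C_{i,\alpha})$ all sit in commuting squares with the transition maps of the relevant inverse systems. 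Hence each finite-scale map descends to a map of projective limits, so that $e^0=\varprojlim e_i^0$ and $\mathfrak{e}=\varprojlim\mathfrak{e}_i$ are well defined on the limit spaces. The substance of the proof is then to show that the covering property and the transitivity of the group action survive the limit.

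First I would reduce the covering $e^0$ of part (1) to an algebraic model. Taking the limit of the isomorphisms $I_i$ identifies $\underleftarrow{\Omega}^0$ with $\varprojlim\h{\bl{i}^0}{\mathbb{R}}$, and since $\hom(-,\mathbb{R})$ and $\hom(-,U(1))$ turn the direct limit $B_1^\infty:=\varinjlim {\rm C}(\bl{i}^0)$, taken along the subdivision injections ${\rm C}(\enc_{ij})$, into inverse limits, I get canonical identifications $\varprojlim\h{\bl{i}^0}{\mathbb{R}}\simeq\h{B_1^\infty}{\mathbb{R}}$ and $\varprojlim\h{\bl{i}^0}{U(1)}\simeq\h{B_1^\infty}{U(1)}$ under which $e^0$ becomes, up to the base-point translation, the exponential $\exp_\ast=\hom(B_1^\infty,\exp)$. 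Each ${\rm C}(\bl{i}^0)=B_1^{(i)}$ is finitely generated free abelian and the subdivision maps are split inclusions, so $B_1^\infty$ is free abelian and $\mathrm{Ext}(B_1^\infty,\mathbb{Z})=0$; applying the exact functor $\hom(B_1^\infty,-)$ to $0\to\mathbb{Z}\to\mathbb{R}\to U(1)\to0$ gives the short exact sequence
\[
0\to\h{B_1^\infty}{\mathbb{Z}}\to\h{B_1^\infty}{\mathbb{R}}\xrightarrow{\exp_\ast}\h{B_1^\infty}{U(1)}\to0,
\]
exhibiting $e^0$ as the quotient of $\underleftarrow{\Omega}^0$ by the translation action of the lattice $\Lambda:=\h{B_1^\infty}{\mathbb{Z}}=\varprojlim\Lambda_i$, where $\Lambda_i$ is the deck lattice of the finite-scale covering $e_i^0$.

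The hard part is the local triviality of this quotient. Unlike the finite-scale situation, $\Lambda$ need not be discrete in the inverse-limit topology, since integer cochains supported at fine scales project to zero at coarse ones; thus \emph{covering map} must be read as local triviality with the totally disconnected fiber $\Lambda=\varprojlim\Lambda_i$. I would establish this scale by scale: a basic open set of $\varprojlim\h{\bl{i}^0}{U(1)}$ is the preimage, under the canonical projection to scale $i$, of an evenly covered neighborhood $U\subset\h{\bl{i}^0}{U(1)}$ for the covering $e_i^0$; pulling the $e_i^0$-trivialization over $U$ back through the tower, and using that the surjective affine coarse-graining maps $\ti{\pi_{ji}}$ carry sheets of $e_j^0$ to sheets of $e_i^0$ compatibly (a consequence of the surjections $\Lambda_j\twoheadrightarrow\Lambda_i$), exhibits $e^0$ over such a set as a product of $U$ with $\Lambda$. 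Checking that these local trivializations are independent of the auxiliary scale $i$ and patch together is the one step that must be carried out with care; everything else is formal.

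Finally, parts (2) and (3) follow quickly. In Case 1, $\mathfrak{e}=j_A^{-1}\circ(e^0\times\id)$ is the composite of $e^0$ times a homeomorphism with the isomorphism $j_A^{-1}$ from the continuum version of Lemma \ref{lma:decomposition}, hence a covering; in Case 2 one adds that $\exp_D$ is a finite product of the covering $\mathbb{R}\to U(1)$ and that solving the compatibility conditions for $\tilde{\varphi}^D$ in terms of $\omega$ makes the constrained projection onto $\underleftarrow{\Omega}_p$ a diffeomorphism, exactly as in Lemma \ref{lma:maps}, so $\mathfrak{e}$ is again a covering. For the homogeneity in part (3), the action of $\underleftarrow{\Omega}^0\times U(1)^n=\varprojlim(\Omega_i^0\times U(1)^n)$ is the limit of the transitive finite-scale actions; transitivity passes to the limit by a Mittag-Leffler argument, since for any two points of the limit the sets of group elements carrying one to the other form an inverse system of nonempty cosets with surjective transition maps (surjectivity of $\Omega_j^0\to\Omega_i^0$ and constancy of $U(1)^n$), whence the limit of these sets is nonempty.
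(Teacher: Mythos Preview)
Your proposal is correct and substantially more careful than the paper's own proof, which is terse: the paper records the commutativity of diagram~(\ref{eqn:diagram1}), concludes that $e^0$ and $\mathfrak{e}$ are well defined as projective limits, and then asserts in one line that ``since the maps $e_i^0$ and the maps $\mathfrak{e}_i$ are covering maps, so are the maps $e^0$ and $\mathfrak{e}$.'' The remaining items (the diffeomorphism onto $\underleftarrow{\Omega}_p$ in Case~2 and the transitivity of the group action) are handled by direct reference to the proof of Lemma~\ref{lma:maps}.

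Your route is genuinely different in two respects. First, you recast the limit algebraically by identifying the towers with $\hom(B_1^\infty,-)$ for $B_1^\infty=\varinjlim{\rm C}(\bl{i}^0)$ and reading off the kernel $\Lambda=\hom(B_1^\infty,\mathbb{Z})$ from the long exact sequence in $\hom$; this makes the deck group explicit, which the paper never does. Second, and more importantly, you flag that $\Lambda$ is \emph{not} discrete in the inverse-limit topology, so the limit map is at best a covering in the generalized sense of a locally trivial surjection with totally disconnected fiber; the paper's one-line claim hides exactly this subtlety, and your scale-by-scale sketch of local triviality is real content that the paper omits. Your Mittag--Leffler argument for transitivity is likewise an addition over the paper's ``one can easily verify.''

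One small technical point: the claim that $B_1^\infty$ is free abelian because ``the subdivision maps are split inclusions'' needs more care, since a $\mathbb{Z}$-splitting is not immediate from surjectivity of the real-linear duals $\Omega_j^0\to\Omega_i^0$. You do not actually need freeness for the argument, though: surjectivity of $\exp_\ast$ already follows from surjectivity of each $e_i^0$ together with the Mittag--Leffler condition (all transition maps in both towers are onto), so the $\mathrm{Ext}$ computation can be bypassed.
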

\begin{proof}
The affine maps $\ti{\pi_{ji}}:\Omega_{j,p}\rightarrow\Omega_{i,p}$, are such that the following diagrams commute: 
\begin{equation}
\label{eqn:diagram1}
\xymatrix{
		\Omega_{j,p}
			\ar[r]^{\ti{\pi_{ji}}}\ar[d]^{e_{j}^0}
	&	
		\Omega_{i,p}
			\ar[d]^{e_i^0}
	\\
		\h{\bl{{j}}^0}{U(1)}
			\ar[r]^{\pi_{ji}}
	&
		\h{\bl{i}^0}{U(1)}
}.
\end{equation}
The commutativity of the diagram (\ref{eqn:diagram1}) 
allows us to define 
$e^0$ and $\mathfrak{e}$ as in Lemma \ref{lma:maps}. Since the maps $e_i^0$ and the maps 
$\mathfrak{e}_i$
are covering maps, so are the maps $e^0$ and $\mathfrak{e}$. 

The rest of the proof continues in complete analogy with the proof of Lemma \ref{lma:maps}. 

It is clear that the compatibility conditions can be 
solved considering $\omega$ as data and $\tilde{\varphi}^D$ as unknown, which means that the projection 
\[
(\underleftarrow{\Omega}_p
\times 
[\oplus_{\alpha=1}^m \h{\mathbb{Z} D_\alpha}{\mathbb R} ])|_{\rm \bf compat.}     
\to 
\underleftarrow{\Omega}_p
\]
is onto and invertible. 

In {\em Case 1} the $\underleftarrow{\Omega}^0\times U(1)^n$ action on 
$\underleftarrow{\Omega}_p \times\ \h{\bl{\Delta}^A}{U(1)} $ is clear. 
In {\em Case 2} the $\underleftarrow{\Omega}^0$ action on 
$
(\underleftarrow{\Omega}_p
\times 
[\oplus_{\alpha=1}^m \h{\mathbb{Z} D_\alpha}{\mathbb R} ])|_{\rm \bf compat.}
$
is the one inherited from the action on $\underleftarrow{\Omega}_p$. 
One can easily verify that, in both cases, the action is transitive. 

It is also instructive to look at the relation between the group action at different scales with the coarse graining maps. The action is compatible with coarse graining if one defines the natural coarsening of the symmetry group from the finer to the coarser scale. 
\end{proof}
Now we comment on the meaning of the results of this section assuming Case 1, $\bl{M} = \bl{M}^0 \bl{M}^A$ and $\bl{M}^0 \cap \bl{M}^A = \{ \id \}$. Just before closing the section we will explain how to adapt the statements to the general case.

The space $\underleftarrow{\Omega}_p\times \h{\bl{M}^A}{U(1)}$ 
parametrizes configurations in 
the continuum limit of a collection of effective theories. If we take a smooth connection modulo gauge $[A]\in\AGp$, the sequence 
$\omega_i^A\in\Omega_{i,p}$ given by
$$
	\omega_i^A(\sigma):=\frac{1}{2\pi}\int_{\sigma} F^A,\qquad
	\sigma\in ({\rm Sd}\,|\Delta_i|^2,\phi_i) 
$$
satisfies $\ti{\pi}_{ji}(\omega_j^A)=\omega_i^A$. Hence, it defines an evaluation of the curvature in the continuum limit $\omega^A\in\underleftarrow{\Omega}_p$. Two classes of connections $[A],[A']$ having the same curvature form $F^A=F^{A'}$ have the same image $\omega^A=\omega^{A'}$. In addition, given two classes $[A],[A']$ having different curvature $2$-forms, $F^A\neq F^{A'}$, we have a sufficiently fine scale $i$ such that $\omega_i^A(\sigma)\neq\omega_i^{A'}(\sigma)$ for certain $\sigma\in ({\rm Sd}\,|\Delta_i|,\phi_i)$. These considerations motivate the following definition. 

\begin{dfn}\label{dfn:curv} For every connection modulo gauge $[A]\in\AGp$, we define $\mathrm{Curv}([A])=\omega^A$. Thus, we define a \emph{curvature evaluation map}
$$
	\mathrm{Curv}: \AGp \rightarrow \underleftarrow{\check{\Omega}} .
$$
This map is not injective since two connections may have the same curvature form and at the same time define different holonomies for non contractible loops. Thus, we define an \emph{extended curvature map} as $\ti{\mathrm{Curv}}([A])=(\omega^A, \mathrm{Hol}_A|_{\bl{M}^A})$,
$$
	\ti{\mathrm{Curv}}:\AGp\rightarrow\underleftarrow{\check{\Omega}}
	\times \h{\bl{M}^A}{U(1)}.
$$
Since every connection is completely described by the curvature form and the holonomy on loops in $\bl{M}^A$, 
the extended curvature map is actually injective. 
\end{dfn}
It is clear that $\mathrm{Curv} (\AGp) \subset  \underleftarrow{\Omega}_p$, and that 
$\ti{\mathrm{Curv}} (\AGp)  \subset  
\underleftarrow{\Omega}_p 
\times \h{\bl{M}^A}{U(1)}$. 

The homomorphisms
$$
\ba_i:\bl{i}\rightarrow\bl{M}
$$
are compatible, $\ba_i=\ba_j\circ \enc_{ij}$. The collection of maps $\h{\ba_i}{G}$ then defines a map $\beta:\overline{\AG}\rightarrow\underleftarrow{\AG}$.

When Case 1 holds, we can 
summarize the results contained in this section in the following commutative diagram involving the covering map $\mathfrak{e}$: 
\begin{equation*}
	\xymatrix{
		\AGp
			\ar[r]^{\mathrm{Hol}}\ar[d]^{\ti{\mathrm{Curv}}}
	&
		\overline{\AG}
			\ar[d]^{\beta}
	\\
		\underleftarrow{\Omega}_p\times \h{\bl{M}^A}{U(1)}
			\ar[r]^{\; \; \; \; \; \; \; \; \; \; \; \; \; \; \; \; \; \mathfrak{e}}
	&
		\underleftarrow{\AG}
	}.
\end{equation*}

If we are in Case 2 --$\bl{M} = \bl{M}^0 \bl{M}^A$ and 
$\bl{M}^0 \cap \bl{M}^A$ is a non trivial subgroup of $\bl{M}$-- the essence of what we wrote in the last paragraphs is maintained. We just have to refine the expressions. Here are the two necessary replacements: \\
1) instead of $\underleftarrow{\Omega}_p\times \h{\bl{M}^A}{U(1)}$, the relevant space is \\
$
(\underleftarrow{\Omega}_p
\times 
[\oplus_{\alpha=1}^m \h{\mathbb{Z} D_\alpha}{\mathbb R} ])|_{\rm \bf compat.}   \oplus [ \oplus_{\alpha=1}^n \h{\mathbb{Z} C_\alpha}{U(1)} ] 
$; \\
2) instead of defining the extended curvature map  
$\ti{\mathrm{Curv}}$ as done above, 
$$
\ti{\mathrm{Curv}}:\AGp\rightarrow
(\underleftarrow{\Omega}_p
\times 
[\oplus_{\alpha=1}^m \h{\mathbb{Z} D_\alpha}{\mathbb R} ])|_{\rm \bf compat.}   \oplus [ \oplus_{\alpha=1}^n \h{\mathbb{Z} C_\alpha}{U(1)} ] 
$$
is defined by 
$\ti{\mathrm{Curv}}([A])=(\omega^A, 
\exp_D(
\mathrm{Hol}_A|_D), \mathrm{Hol}_A|_C)$, 
where $\mathrm{Hol}_A|_D$ and $\mathrm{Hol}_A|_C$ mean the holonomy evaluation map restricted to the $D$ and $C$ factors of $C(\bl{M}^A)$ respectively.


\section{Gaussian kinematical measures} \label{sec5}
The purpose of this section is to give an example of a family of gaussian probability measures on the spaces that we described in the previous sections. 
When we work at a given scale $i$, Lemma \ref{lma:maps} completely characterizes the homogeneous measure of interest. First, the space where we store extended curvature evaluations at scale $i$ is finite dimensional and homogeneous which let us talk about homogeneous measures. Second, the covering map brings the measure of the covering space to the compact space 
$\mathcal{A}/\mathcal{G}_{\Delta, \star}^\mathrm{Hol}$ which is a compact homogeneous space with a unique normalized homogeneous measure. 
A gaussian measure at that scale can be constructed just by multiplying by a gaussian weight. 
In this article we model weights inspired in actions that are quadratic in the curvature. The most delicate point in constructing a measure in the continuum, 
constructed via a projective limit, 
is the fulfillment of the cylindrical consistency conditions between measures at different scales. 
Theorem \ref{t1} is the key to this crucial issue, since it 
assures us that the results of Lemma \ref{lma:maps} are compatible with the coarse graining maps and the continuum limit. 
We will work out the details assuming we are dealing with {\em Case 1} described above, 
on the space 
$$
\underleftarrow{\Omega}_p \times \h{\bl{M}^A}{U(1)} \simeq
\underleftarrow{\Omega}_p \times U(1)^n ,
$$
because the clean separation of the connection degrees of freedom into local and global will let us focus on the key issues fast. At the end of the section we will comment on how the same ideas let us treat {\em Case 2} without any new conceptual difficulty.

We will start describing $\underleftarrow{\Omega}_p$ as a measurable space. 
We are given a sequence of measures, one measure $\rho_i$ on each of the spaces 
$\Omega_{i,p}$, that is compatible 
with coarse graining in the sense that $(\ti{\pi}_{ji})_\ast\rho_j=\rho_i$. In this situation, the sequence defines a \emph{cylindrical measure,} $\rho$ on $\underleftarrow{\Omega}_p$ as explained below. 
Consider subsets of the form $\ti{\pi}_i^{-1}(U)\subset\underleftarrow{\Omega}_p$ with $U\subset \Omega_{i,p}$ a borelian set. Unions and intersections of subsets of this type can be used to generate a $\sigma$-algebra. A cylindrical measure $\rho$ on $\underleftarrow{\Omega}_p$ is a finite, additive, positive valued function on the mentioned algebra. It is characterized by the measure that it assigns to the subsets of the described type (see for example \cite{Xi}). 

Let us briefly give two remarks. The first is that 
we do not work on $\underleftarrow{\check{\Omega}}$ because we can give a measure to each of the strata $\underleftarrow{\Omega}_p$ separately. If we are interested in considering $\underleftarrow{\check{\Omega}}$ as the space of histories, we will have to supplement the collection of measures on each stratum with a measure on the set of labels, classes of bundles $(E, p, M)$ up to equivalence, of the strata. 
The second remark is that the measures that we construct in this section are kinematical measures. One simple objective achieved by their construction is to exhibit well behaved, well understood measures in the spaces of interest. 
It may happen that these measures turn out to be useful in the construction of physically relevant measures. 
In the closing section we will comment on the construction of measures of physical interest through a renormalization procedure.

Let us start fixing notation to describe the spaces of curvature evaluations and their coarse graining maps. 
Consider the set of functions $w:({\rm Sd}\,|\Delta_i|^2,\phi_i)\rightarrow\mathbb{R}$ as points in $\mathbb{R}^{N_{2,i}}$, where the components are in correspondence with the set $({\rm Sd}\,|\Delta|^2,\phi_i)=\{\sigma^1,\dots,\sigma^{N_{2,i}}\}$. Choose a point $\omega^0\in\underleftarrow{\Omega}_p$ and let $\omega_i^0:=\ti{\pi}_i(\omega^0)\in\Omega_{i,p}$. In this way, $\Omega_{i,p}$ can be seen as the affine subspace 
of $\mathbb{R}^{N_{2,i}}$ whose points satisfy the conditions of Definition \ref{dfn:Omega}. 
We can identify the affine space $\Omega_{i,p}$ with a finite dimensional vector space $\Omega_i^0\subset\mathbb{R}^{N_{2,i}}$. 
The correspondence between $\Omega_i^0$ and the affine space $\Omega_{i,p}$ is given by the translation from $\omega_i^0$, i.e. for every $\vec{\omega_i}\in\Omega_i^0$, we have
\begin{equation*}
	\omega_i = \omega_i^0 + \vec{\omega_i}
	\in\Omega_{i,p} .
\end{equation*}
We have linear maps $\ti{\pi}_{ji}:\Omega_j^0\rightarrow\Omega_i^0$ induced by the affine coarse graining maps $\ti{\pi}_{ji}:\Omega_{j,p}\rightarrow\Omega_{i,p}$, and we have the 
corresponding projective limit of topological vector spaces 
$\underleftarrow{\Omega}^0 = \varprojlim \Omega_i^0$ which is 
the underlying vector space of $\underleftarrow{\Omega}_p$.

Consider the dual spaces $\Omega_i'=\h{\Omega_i^0}{\mathbb{R}}$, and the corresponding dual maps $\pi_{ij}'=\h{\ti{\pi}_{ji}}{\mathbb{R}}:\Omega_i'\rightarrow\Omega_j'$, $i<j$. We can construct their injective limit 
$$
	\underrightarrow{\Omega'}:=\varinjlim\Omega_i'.
$$
We can see $\underrightarrow{\Omega'}$ as the space of $\omega^0$-homogeneous affine functions on $\underleftarrow{\Omega}_p$. It is spanned by function of the type 
$$
(\pi_i' \check{f}) [ \omega] = \check{f} ( \pi_i \, \omega - \pi_i \, \omega^0 )
$$
induced by any $\check{f} \in \Omega_i'$ for any $i$. Notice that a simple corollary of definitions says that the elements of $\underrightarrow{\Omega'}$ separate points in 
$\underleftarrow{\Omega}_p$. 
The elements of $\Omega_i'$ can be parametrized as 
$\check{f} = \sum_k r_k e^k$ with each basis vector $e^k$ acting on 
$\mathbb{R}^{N_{2,i}}$ by the extraction of the $k$-th component of the vectors. 
Furthermore, 
the coefficients $\{ r_k \}$ used in the parametrization 
satisfy the set of (redundant) conditions of the following two types: 
(i) $\sum_{\sigma^k \subset \partial \tau} r_k = 0$ 
for all $3$-simplices $\tau \in ({\rm Sd}\,|\Delta|^3,\phi)$, 
(ii) $\sum_{\sigma^k \subset S} r_k = 0$ 
for all simplicial surfaces without boundary, $\partial S = \emptyset$. 
%


We can use the map 
$\mathrm{Curv}\circ \pi_i'$ to bring smooth connections to $\Omega_{i,p}$, thus, we can act with smooth connections on elements of $\Omega_i'$. 
An important property is that smooth connections separate points in $\Omega_{i,p}$, and since this is true for every $i$, smooth connections separate points in 
$\underrightarrow{\Omega'}$. In the following lemma we state this property in a manner that will be most useful later on. 
\begin{lma}\label{ImageCurvIsDense}
$\mathrm{Curv}(\AGp)$ 
is a dense subset of $\underleftarrow{\Omega}_p$. 
\end{lma}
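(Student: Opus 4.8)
The plan is to prove density by reducing to each finite scale, where the curvature evaluations realized by smooth connections turn out to fill the whole affine space $\Omega_{i,p}$; the passage to the continuum limit is then formal, and it is exactly the separating‑functionals property established in the paragraph preceding the lemma.

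First I would use the defining topology of $\underleftarrow{\Omega}_p=\varprojlim\Omega_{i,p}$. A basic neighborhood of a point $\omega$ is a cylinder $\ti{\pi}_i^{-1}(U)$ with $U\subset\Omega_{i,p}$ open and $U\ni\ti{\pi}_i(\omega)$, so a subset is dense precisely when its image under each $\ti{\pi}_i$ is dense in $\ti{\pi}_i(\underleftarrow{\Omega}_p)\subseteq\Omega_{i,p}$. Since $\mathrm{Curv}(\AGp)\subset\underleftarrow{\Omega}_p$, we have $\ti{\pi}_i(\mathrm{Curv}(\AGp))=\{\omega_i^A\mid[A]\in\AGp\}\subseteq\ti{\pi}_i(\underleftarrow{\Omega}_p)\subseteq\Omega_{i,p}$. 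Hence it suffices to show that for every scale $i$ the set of scale-$i$ curvature evaluations $\{\omega_i^A\}$ equals all of $\Omega_{i,p}$; density in $\underleftarrow{\Omega}_p$ follows immediately.

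Next I would carry out the scale-$i$ computation. Fix a base connection $A_0$ on $E$; because the structure group is abelian every connection is $A_0+a$ with $a$ a global real $1$-form on $M$, and by Stokes, writing $\phi_a$ for the de Rham $1$-cochain $e\mapsto\frac{1}{2\pi}\int_e a$,
\[
\omega_i^{A_0+a}(\sigma)-\omega_i^0(\sigma)=\frac{1}{2\pi}\int_{\partial\sigma}a=(d\phi_a)(\sigma),
\]
so the difference vectors are exactly the simplicial $2$-coboundaries $d\phi_a$. On the other hand, the underlying vector space $\Omega_i^0$ consists, by Definition \ref{dfn:Omega}, of closed $(2,\Delta)$-cochains whose periods on every $2$-cycle vanish; over $\mathbb{R}$ a cocycle annihilated by the pairing with $H_2$ is a coboundary, so $\Omega_i^0=B^2$. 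Since the period map $\Omega^1(M;\mathbb{R})\to C^1$ is surjective (prescribe arbitrary edge integrals by summing bump $1$-forms supported near the interiors of the edges), every coboundary is realized as some $d\phi_a$, and therefore $\{\omega_i^A\}=\omega_i^0+B^2=\Omega_{i,p}$, as required.

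Finally I would note how this matches the paper's narrative: the scale-wise surjectivity just proved is, dually, precisely the assertion that \emph{smooth connections separate points in $\underrightarrow{\Omega'}$} (a nonzero $\check f\in\Omega_i'$ cannot annihilate the subspace $\{\omega_i^A-\omega_i^0\}=\Omega_i^0$), so via the separating functionals of $\underrightarrow{\Omega'}$ one obtains density by a Hahn--Banach argument as well. The genuinely substantive step is the identification $\Omega_i^0=B^2$ together with surjectivity of the period map $\Omega^1(M;\mathbb{R})\to C^1$; I expect this to be the main obstacle, since one must check that the ``Bianchi/closure'' and vanishing-period conditions cut out exactly the coboundaries and that smooth $1$-forms realize arbitrary cochains. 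Everything else—the reduction to a single scale and the continuum-limit conclusion—is a formal consequence of the projective-limit topology.
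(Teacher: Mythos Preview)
Your proof is correct and follows the same idea the paper sketches: the paper gives no explicit proof, merely stating the lemma as a reformulation of the preceding claim that smooth connections separate points of $\underrightarrow{\Omega'}$, which is exactly the dual of your scale-wise surjectivity $\{\omega_i^A\}=\Omega_{i,p}$. Your argument supplies the details the paper leaves implicit (identifying $\Omega_i^0$ with the simplicial $2$-coboundaries via the closed-plus-vanishing-period condition, and surjectivity of the period map $\Omega^1(M;\mathbb{R})\to C^1$), and your final paragraph correctly records that the two formulations are equivalent.
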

%

Now we give an example of a gaussian measure $\rho$ on $\underleftarrow{\Omega}_p$ centered on $\omega^0$ constructed using a riemannian metric on $M$. The measure 
will be defined by a sequence of measures compatible with coarse graining as explained above. 
Let $\rho_i$ be the measure on $\Omega_{i,p}$ whose characteristic function is given by 
\begin{equation}
\label{FTi}
\tilde{\rho}_i(f) = \int_{\Omega_{i,p}} \exp(i f (\omega)) d\rho_i(\omega) = 
\exp(\frac{-1}{2} Q_i(\check{f}, \check{f})), 
\end{equation}
where $f: \Omega_{i,p} \to \mathbb{R}$ 
is defined by $f(\omega_i) = \check{f}(\omega_i - \omega_i^0)$ for a linear function $\check{f} \in \Omega_i'$, and $Q_i$ is the positive definite bilinear form on $\Omega_i'$ 
defined as follows: 
$$
Q_i(\sum_k r_k e^k , \sum_{k'} s_{k}' e^{k'} ) = \sum_k r_k s_k a^i_k , 
$$
where $a^i_k$ is the area of $\sigma^k_i$ according to the given metric, and 
$\check{f} = \sum_k r_k e^k \in \Omega_i'$. 
	
At this point, for each scale we have a bilinear form $Q_i$ on $\Omega_i'$ and a gaussian measure $\rho_i$ on $\Omega_{i,p}$. These structures are compatible with the coarse graining map 
$\ti{\pi_{ji}} : \Omega_{j,p} \to \Omega_{i,p}$ in two ways. First, the dual coarse graining map 
is an isometry, $\pi_{ij}^{'\ast} Q_j = Q_i$. This can be verified directly from the definition. 
Second, the measures are compatible with coarse graining, \\
$\ti{\pi_{ji}}_\ast \rho_j = \rho_i$, which is easily verified from (\ref{FTi}). 
Compatibility implies that on 
$\underrightarrow{\Omega'}$ we have defined a bilinear form $Q$, and on 
$\underleftarrow{\Omega}_p$ we have defined a (projective) gaussian measure $\rho$. 

Recall that the Fourier transform of a measurable real function 
$f:\underleftarrow{\Omega}_p\rightarrow \mathbb{R}$ in the measure space $\underleftarrow{\Omega}_p$ is defined as the mean of $e^{ i f}$
$$
	\int_{\underleftarrow{\Omega}_p}\exp\left( i f(\omega)\right)d\rho.
$$
An affine function $f:\underleftarrow{\Omega}_p \rightarrow \mathbb{R}$ can be written as 
$f(\omega) = \check{f} (\omega - \omega_0)$ for a linear function 
$\check{f}\in\underrightarrow{\Omega'}$. 
From equation (\ref{FTi}), we can verify that the Fourier transform of $f$ is 
$\exp\left(-Q(\check{f},\check{f})/2\right)$. Below we state this important fact as a theorem. 
\begin{tma}\label{tma:Q}
A riemannian metric on $M$ induces a gaussian measure $\rho$ on $\underleftarrow{\Omega}_p$ centered in $\omega^0\in\underleftarrow{\Omega}_p$. The Fourier transform of any affine function 
$f:\underleftarrow{\Omega}_p \rightarrow \mathbb{R}$ 
of the type described above 
is $\exp(-Q(\check{f},\check{f})/2)$, where $Q$ is the positive definite bilinear form induced on $\underrightarrow{\Omega'}$ by the collection of compatible bilinear forms $Q_i$ constructed using the riemannian metric. 
\end{tma}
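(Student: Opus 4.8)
The plan is to treat the theorem as the assembly of the finite--scale data $(\Omega_{i,p},\rho_i,Q_i)$ into the projective limit, so that there are really two things to establish: the existence of the cylindrical gaussian measure $\rho$ on $\underleftarrow{\Omega}_p$ centered at $\omega^0$, and the Fourier transform formula for affine functions. The entire argument rests on one geometric fact, namely that area is additive under barycentric subdivision, and it is exactly this fact that yields the isometry $\pi_{ij}^{'\ast}Q_j=Q_i$ quoted before the statement.

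First I would verify the cylindrical consistency $\ti{\pi}_{ji\,\ast}\rho_j=\rho_i$ at the level of characteristic functions. Since a measure on the finite dimensional space $\Omega_{i,p}$ is determined by its Fourier transform \eqref{FTi}, it suffices to check that the two sides agree on every $\check{f}\in\Omega_i'$. Pulling the affine function $f$ back along $\ti{\pi}_{ji}$ turns $\check{f}$ into $\pi_{ij}'\check{f}\in\Omega_j'$ (here one uses $\ti{\pi}_{ji}(\omega_j^0)=\omega_i^0$, which holds because $\omega^0$ is a point of the limit), so the Fourier transform of $\ti{\pi}_{ji\,\ast}\rho_j$ at $f$ equals $\exp(-\tfrac12 Q_j(\pi_{ij}'\check{f},\pi_{ij}'\check{f}))$. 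The isometry $\pi_{ij}^{'\ast}Q_j=Q_i$ then collapses this to $\exp(-\tfrac12 Q_i(\check{f},\check{f}))=\tilde{\rho}_i(f)$, giving the consistency. With consistency in hand, the projective limit construction recalled before the statement (see \cite{Xi}) produces the cylindrical measure $\rho$ on $\underleftarrow{\Omega}_p$; I would not attempt to promote it to a countably additive measure, since the theorem asserts only the gaussian cylindrical measure.

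Next I would check that $Q$ is well defined and positive definite on the inductive limit $\underrightarrow{\Omega'}=\varinjlim\Omega_i'$. An element of $\underrightarrow{\Omega'}$ has representatives at several scales, related by the maps $\pi_{ij}'$; the isometry $\pi_{ij}^{'\ast}Q_j=Q_i$ says precisely that $Q_i(\check{f},\check{f})$ is independent of the chosen representative, so $Q$ descends to the limit. Positive definiteness is inherited scale by scale: in the basis of component--extraction functionals $e^k$ the form $Q_i$ is diagonal with entries the strictly positive areas $a^i_k$, hence positive definite, and a nonzero element of $\underrightarrow{\Omega'}$ is represented by a nonzero $\check{f}_i$ at some scale, on which $Q$ evaluates to $Q_i(\check{f}_i,\check{f}_i)>0$.

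Finally I would compute the Fourier transform. Every affine $f:\underleftarrow{\Omega}_p\to\mathbb{R}$ of the type described above is cylindrical: writing $f(\omega)=\check{f}(\omega-\omega^0)$ with $\check{f}\in\underrightarrow{\Omega'}$, the equality $\underrightarrow{\Omega'}=\varinjlim\Omega_i'$ provides a finite scale $i$ and $\check{f}_i\in\Omega_i'$ with $\check{f}=\pi_i'\check{f}_i$, so that $f=f_i\circ\ti{\pi}_i$ for the corresponding affine function $f_i$ on $\Omega_{i,p}$. Using $\ti{\pi}_{i\,\ast}\rho=\rho_i$ the integral over $\underleftarrow{\Omega}_p$ reduces to the finite--scale integral \eqref{FTi}, yielding $\exp(-\tfrac12 Q_i(\check{f}_i,\check{f}_i))$, which by the previous paragraph equals $\exp(-Q(\check{f},\check{f})/2)$ and is independent of the scale. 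The one point that genuinely needs care --- and which I expect to be the main obstacle --- is the consistency/isometry step, since everything else is bookkeeping while this is where the riemannian geometry enters: one must check by hand that $\pi_{ij}'$ sends $e^k$ to the sum of the extraction functionals of the subsimplices of $\sigma^k_i$, so that $Q_j(\pi_{ij}'e^k,\pi_{ij}'e^k)=\sum_{\sigma^l_j\subset\sigma^k_i}a^j_l=a^i_k$, the cross terms vanishing because distinct $2$--simplices subdivide into disjoint families.
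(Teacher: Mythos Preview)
Your proposal is correct and follows exactly the route the paper takes: the paper does not give a separate proof after the statement but instead treats the theorem as a summary of the preceding paragraphs, asserting the isometry $\pi_{ij}^{'\ast}Q_j=Q_i$ ``directly from the definition,'' deducing cylindrical consistency from (\ref{FTi}), and then reading off the Fourier transform of a cylindrical affine function. You have simply made each of these steps explicit, including the area--additivity computation that underlies the isometry; there is nothing to add or correct.
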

In {\em Case 1} the space of interest is $\underleftarrow{\Omega}_p \times \h{\bl{M}^A}{U(1)}$. 
At each scale, the physically motivated measures 
in $\Omega_{i,p} \times \h{\bl{i}^A}{U(1)}$
come from a uniform measure 
multiplied by a weighting factor calculated by some local formula. 
Thus, in the continuum 
the natural measures to consider 
are products of the gaussian measures constructed for $\underleftarrow{\Omega}_p$ and the Haar measure on $\h{\bl{M}^A}{U(1)} \simeq U(1)^n$. 
Notice that the soundness of this argument, depends crucially on 
the compatibility of the 
$\Omega_i^0 \times U(1)^n$ action at different scales 
with the the coarse graining maps, as stablished in the proof of Theorem \ref{t1}. 
Now we comment on {\em Case 2} described in Lemmas \ref{lma:decomposition} and \ref{lma:maps}, where 
$\bl{\Delta} = \bl{\Delta}^0 \bl{\Delta}^A$ and 
$\bl{\Delta}^0 \cap \bl{\Delta}^A$ is a non trivial subgroup of $\bl{\Delta}$. 
The space of interest is 
$
(\underleftarrow{\Omega}_p
\times 
[\oplus_{\alpha=1}^m \h{\mathbb{Z} D_\alpha}{\mathbb R} ])|_{\rm \bf compat.}   \oplus [ \oplus_{\alpha=1}^n \h{\mathbb{Z} C_\alpha}{U(1)} ]
$. 
The construction follows the logic of fixing a uniform measure 
at each scale (which can be done thanks to the results of Lemma \ref{lma:maps}), multiply it by a weighting factor calculated by some local formula which may be, for example, a gaussian weight, and considering the induced measure in the continuum limit space (when the cylindrical consistency conditions are satisfied). Thus, for the $C$-factor we will consider the Haar measure, and for the other factor we will consider a gaussian measure in 
$
(\underleftarrow{\Omega}_p
\times 
[\oplus_{\alpha=1}^m \h{\mathbb{Z} D_\alpha}{\mathbb R} ])|_{\rm \bf compat.}
$ 
using the technique we just developed for $\underleftarrow{\Omega}_p$. 
This is possible because, as shown in Theorem \ref{t1}, we can think of the compatibility conditions as a map from 
$\underleftarrow{\Omega}_p$ to 
$
(\underleftarrow{\Omega}_p
\times 
[\oplus_{\alpha=1}^m \h{\mathbb{Z} D_\alpha}{\mathbb R} ])|_{\rm \bf compat.}
$ 
and the measure can be pushed forward with this map. 
We remark that the subtle point was to find the correct covering space in {\em Case 2} and factorize it as we did in Lemmas \ref{lma:decomposition} and \ref{lma:maps}.

Even though the defined cylinder measure is of gaussian form, we do not know much about the resulting measure space. 
In the case of a two dimensional base space we do have a result showing that, in essence, $( \underleftarrow{\Omega}_p , \rho )$ is a well known space \cite{Di}. 
\begin{dfn}\label{support}
Let $\Omega_\rho\subset\underleftarrow{\Omega}_p$ be the affine subspace that is the support of the gaussian measure $\rho$. 
Denote its corresponding subspace of the underlying vector space 
by $\Omega_\rho^0 = \Omega_\rho - \omega^0 \subset 
\underleftarrow{\Omega}^0$. 
\end{dfn}
\begin{tma}\label{2d}
In the case $\dim M=2$ 
$$
\Omega_\rho^0 \subset \Omega_{bv}^0 \subset \underleftarrow{\Omega}^0 , 
$$
where $\Omega_{bv}^0$ is a reflexive separable Banach space. 
\end{tma}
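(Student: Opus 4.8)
\emph{Plan.} The starting point is to recognize the Gaussian measure $\rho$ of Theorem \ref{tma:Q} as (pinned) white noise on $(M,dA)$. By construction the flux $\omega(\sigma)$ through each $2$-simplex is a centered Gaussian of variance equal to its area $a_k$, fluxes through disjoint simplices are independent, and the closure/Euler condition only pins the total $\sum_k\omega(\sigma_k)$ to $\mathbf{e}([M])$; equivalently, by Theorem \ref{tma:Q} the covariance $Q$ induced on $\underrightarrow{\Omega'}$ is the $L^2(M,dA)$ inner product. Thus the Cameron--Martin space of $\rho$ is $L^2(M,dA)$, embedded into $\underleftarrow{\Omega}^0$ as the densities $f\,dA$, and $\Omega_\rho^0$ is precisely the closure of this space in the projective-limit topology. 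The whole point is therefore to exhibit a reflexive separable Banach space, sitting between $L^2$ and $\underleftarrow{\Omega}^0$, on which $\rho$ is already a Radon measure.

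Concretely, I would take $\Omega_{bv}^0$ to be a negative-regularity Besov space $B^{-s}_{p,q}(M)$ (or, if a Hilbert space suffices, the Sobolev space $H^{-s}(M)$) with $1<p,q<\infty$ and $s$ slightly larger than $\dim M/2=1$, realized inside $\underleftarrow{\Omega}^0$ through its pairing with smooth test $2$-forms; the exhaustiveness of the sequence of scales (Definition \ref{dfn:scales}) guarantees that this pairing is compatible with the simplicial coarse-graining, so that the embedding $\Omega_{bv}^0\hookrightarrow\underleftarrow{\Omega}^0$ is well defined and continuous. The containment $\Omega_\rho^0\subset\Omega_{bv}^0$ then follows from a single moment estimate. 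Writing $\{e_n\}$ for an $L^2$-orthonormal basis diagonalizing the Laplace--Beltrami operator, with Weyl asymptotics $\lambda_n\sim c\,n$ in dimension two, white noise has unit-variance coefficients $\langle\omega-\omega^0,e_n\rangle$, whence
\begin{equation*}
\mathbb{E}_\rho\big[\,\|\omega-\omega^0\|_{-s}^2\,\big]=\sum_n(1+\lambda_n)^{-s}\,\mathbb{E}_\rho\,|\langle\omega-\omega^0,e_n\rangle|^2=\sum_n(1+\lambda_n)^{-s}<\infty
\end{equation*}
precisely when $s>\dim M/2=1$. Hence $\|\omega-\omega^0\|_{-s}<\infty$ for $\rho$-almost every $\omega$, so $\rho(\Omega_{bv}^0)=1$. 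Since $\Omega_{bv}^0$ is a separable Banach space continuously embedded in $\underleftarrow{\Omega}^0$ and $\rho$ restricts to a Radon Gaussian measure on it, its topological support is a closed affine subspace of $\Omega_{bv}^0$; translating by $\omega^0$ yields $\Omega_\rho^0\subset\Omega_{bv}^0\subset\underleftarrow{\Omega}^0$. Reflexivity and separability are then standard for $B^{-s}_{p,q}$ with $1<p,q<\infty$ (respectively for $H^{-s}$ on a compact manifold).

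The main obstacle is the passage between the discrete/projective simplicial data and the continuum function-space picture. One must verify that the Besov (or Sobolev) norm, naturally written in the Laplacian eigenbasis, is controlled by --- and comparable to --- a norm built intrinsically from the multiscale fluxes $\omega(\sigma_i^k)$ furnished by the sequence of scales, so that $\Omega_{bv}^0$ is genuinely a subspace of $\underleftarrow{\Omega}^0$ and $\rho$ is Radon there. This matching is exactly where the area-weighting of the forms $Q_i$ and the subdivision structure of Definition \ref{dfn:scales} enter. It also explains why one cannot use the naïve home of the finitely additive curvature charges, namely the space of finite-total-variation signed measures: that space is neither reflexive nor separable, and it does \emph{not} contain the support, since two-dimensional white noise has almost surely infinite total variation. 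Requiring the target to be simultaneously reflexive, separable, and large enough to support $\rho$ is precisely what forces a negative-regularity space.
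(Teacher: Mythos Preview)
The paper does not actually prove Theorem~\ref{2d}; it is stated without proof and attributed to the reference \cite{Di} (the first author's Ph.D.\ thesis). Consequently there is no in-paper argument against which to compare your proposal, and in particular the paper never specifies what $\Omega_{bv}^0$ is beyond the bare properties in the statement.

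Your proposal is a reasonable and essentially standard route to a result of this type: identify $\rho$ with pinned white noise via the covariance of Theorem~\ref{tma:Q}, then use a Weyl-asymptotics moment bound to show that $\rho$ is supported on a negative-order Sobolev or Besov space, which for $1<p,q<\infty$ is reflexive and separable. The one place where your sketch is genuinely incomplete is exactly the point you flag yourself: the embedding $\Omega_{bv}^0\hookrightarrow\underleftarrow{\Omega}^0$ and the identification of the simplicial covariance $Q$ with the $L^2(M,dA)$ inner product have to be made precise through the projective structure, not just asserted by analogy with the continuum. You would need to exhibit, for each scale $i$, a continuous linear map $H^{-s}(M)\to\Omega_i^0$ compatible with $\tilde{\pi}_{ji}$, and check that the pushforward of white noise under this projective system agrees with the cylinder measure $\rho$ of Theorem~\ref{tma:Q}. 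This is doable but not automatic, since the simplicial ``test functions'' $e^k$ are indicator-like rather than smooth, so the pairing with $H^{-s}$ for $s>1$ requires an argument (e.g.\ that indicators of $2$-simplices lie in $H^s$ for $s<1/2$ only, so you must route the embedding through the distributional pairing with the area form rather than through individual simplices). Your remark that the ``bv'' label cannot literally mean signed measures of bounded variation is well taken and correct for the reasons you give.
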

%


\section{A cubical sequence of scales and convergence}\label{convergence}

The objective of this section is to prepare the framework to prove a convergence result for the evaluation of certain sequences of functions of the connection, generated 
by regularization, as the scale becomes finer. 
This convergence result will be central for the construction of the continuum limit presented in the next section. 
To reach our objective we will need to adjust our definition of sequence of scales (Definition \ref{dfn:scales}).
A variation of the simplicial approximation theorem is natural in our framework and leads us to the definition of $i$th representatives of curves and surfaces with boundary. That is, at a given scale $i$, to each piecewise linear curve $c \subset M$ (or surface $S \subset M$) we assign a representative $c_i \subset ({\rm Sd}\, P_i)^{(1)}$ 
(respectively $S_i \subset ({\rm Sd}\, P_i)^{(2)}$) 
where ${\rm Sd}\, P_i$ denotes a triangulation of $M$ constructed by the baricentric subdivision of a cellular decomposition $P_i$ of $M$ with cubical cells. 
Cubical cells are employed because they will let us 
generate the whole sequence of cellular decompositions using 
iterated cartesian bipartition. The relevance in the change of refinement process 
is that 
the shapes of the cells do not become degenerate as the scale is refined, in contrast to what happens if a sequence of cellular decompositions is generated by the iterated application of baricentric subdivision. 

\begin{dfn}[Cubical sequence of scales]
Consider a smooth triangulation of the base manifold $\phi:{\rm Sd}\,|\Delta|\rightarrow M$, that we will denote by $(|\Delta|, \phi)$. 
Now we will define a sequence of scales. 
The starting point of the sequence $(P_0, \phi)$, or $P_0$ to simplify the notation, is a subdivision of $(|\Delta|, \phi)$ with cubical cells. Each $n$ dimensional cell $\tau(\nu, v) \in P_0$ arises by subdivision of an $n$ dimensional simplex $\nu \in(|\Delta|, \phi)$ into $n$ dimensional cubes; there is one subcell per vertex $v\in \nu$. Thus, the cell $\tau(\nu, v)$ can be constructed as the union of the $n$ $n$ dimensional simplices of $({\rm Sd}|\Delta|, \phi)$ which lie inside the simplex $\nu \in(|\Delta|, \phi)$  and contain vertex $v\in \nu$.  

The finer cellular decompositions are generated by iterated cartesian bipartition of each of the cells of $P_0$. 
Each $n$ dimensional cell $\tau(\nu, v) \in P_{i+1}$ arises by subdivision of an $n$ dimensional cell $\nu \in P_i$ into $n$ dimensional cubes with sides of half the length; there is one subcell per vertex $v\in \nu$. 
Notice that the cell $\tau(\nu, v)$ can be constructed as the union of the $n$ $n$ dimensional simplices of ${\rm Sd}\,P_i$ which lie inside the cell $\nu \in P_i$  and contain vertex $v\in \nu$. Here we define the baricentric subdivision of $P_i$ using the affine structure of the cells in $P_i$ which is inherited from the cells of $(|\Delta|, \phi)$. 

We have defined a sequence 
$$
\mathscr{E}=\{P_i  \,\mid\,i\in\mathbb{N}\}
$$
of cellular decompositions of $M$ which we will call scales. 
Notice that if $j \geq i$, then $P_j$ refines $P_i$ and ${\rm Sd}P_j$ refines ${\rm Sd}P_i$. Also, the sequence is exhaustive in the sense that 
for each open set $U\subset M$ there is a scale $P_i$ 
and a $d=\dim M$ dimensional simplex $\sigma\in ({\rm Sd}\,P_i)^d$ with 
$\sigma\subset U$. 
\end{dfn}

In the case in which the base manifold is the torus, $M \sim T^d$, the initial cellular decomposition $P_0$ can be the usual presentation of the torus with a single $d$ cell. 

We are interested in functions of the connection associated with curves and surfaces in $M$. At each scale a regularization of the function will be constructed from a ``regularization'' of the curve or surface which is the best approximation to it at  the given scale. Below we make this concept concrete. 

\begin{dfn}\label{iRep}
The $i$th representative, $i$Rep for short, of a piecewise smooth curve $c \subset M$ (if it exists) 
is the only simplicial curve $c_i \subset ({\rm Sd}\, P_i)^{(1)}$ whose intersection type with $P_i$ is of the same type as that of $c$. A simplicial representative would not exist if the intersection type of $c$ and $P_i$ cannot be matched by that of a simplicial curve due to its required finiteness. 
The simplicial representative can 
be constructed as follows: First divide $c$ into connected components $c_\tau$ whose interior intersects the interior of only one cell $\tau \in P_i$. The segment can intersect many cells, but 
${\rm Int}(c_\tau) \cap {\rm Int}(\tau') \neq \emptyset \Rightarrow \tau = \tau'$. 
The $i$Rep of $c_\tau$ is the simplicial curve 
$c_{\tau, i}\subset {\rm Sd}P_i^{(1)}$ whose vertices are 
$\{ \sigma \in P_i \mbox{ such that } 
c_\tau \cap {\rm Int}(\sigma) \neq \emptyset \}$, where we consider that the interior of a 
cell is the cell minus its boundary, and 
our definition requires that the interior of a 
vertex be considered the vertex itself. 
The curve $c_i$ does not carry a parametrization; it is (the image by $\phi$ of) a simplicial curve: a collection of neighboring links and vertices. If $c$ has an orientation it is clear that $c_i$ inherits it. 

Similarly, the $i$Rep of a piecewise smooth surface $S \subset M$ (if it exists) is the only simplicial surface 
$S_i \subset ({\rm Sd}\, P_i)^{(2)}$ 
whose intersection type with $P_i$ is of the same type as that of $S$, and it can be constructed using the location of its vertices as done above. 
Also, an orientation in $S$ would induce one in $S_i$. 
Notice that if the surface $S$ has a boundary $(\partial S)_i = \partial S_i$. 
\end{dfn}

Now we will state the convergence result mentioned at the beginning of the section.  

\begin{tma}\label{tma:regularityandcurvature}
For every compact PL surface with boundary $S \subset M$ and 
every smooth \emph{closed} $2$-form $\omega$, we have
$$
\int_S\omega=\lim_{i\rightarrow\infty}\int_{S_i}\omega .
$$
In particular, 
for every smooth connection modulo gauge $[A]\in\AGp$ we have
\begin{equation*}
\int_S F^A = \lim_{i\rightarrow\infty} \int_{S_i}F^A = 
\lim_{i\rightarrow\infty} \sum_{\sigma \subset S_i} \omega_i^A(\sigma) .
\end{equation*}
\end{tma}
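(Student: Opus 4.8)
The plan is to turn the statement into an application of Stokes' theorem: at each scale I would build a $3$-chain interpolating between $S$ and $S_i$, and then use $d\omega=0$ to collapse the difference $\int_S\omega-\int_{S_i}\omega$ to a single lateral term whose size is governed by the mesh of $P_i$.

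First I would construct the interpolating chain cell by cell. By Definition \ref{iRep}, $S$ and $S_i$ have the same intersection type with the cubical decomposition $P_i$, so inside each cell $\tau\in P_i$ the pieces $S\cap\tau$ and $S_i\cap\tau$ meet $\partial\tau$ in combinatorially matching configurations. Since cubical cells are convex, I would join $S\cap\tau$ to $S_i\cap\tau$ by the straight-line homotopy inside $\tau$, obtaining a prism $W_{i,\tau}\subset\tau$; this is just the geometric realization of the standard prism (chain-homotopy) operator. The matching of intersection types on shared faces is exactly what guarantees that the prisms glue across neighboring cells into a single $3$-chain $W_i\subset M$ with
$$\partial W_i = S - S_i + L_i,$$
where the lateral piece $L_i$ is the $2$-chain swept by the homotopy carrying $\partial S$ to $\partial S_i=(\partial S)_i$ (the last equality being part of Definition \ref{iRep}).

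Next I would integrate. Since $\omega$ is closed, Stokes' theorem gives
$$0=\int_{W_i}d\omega=\int_{\partial W_i}\omega=\int_S\omega-\int_{S_i}\omega+\int_{L_i}\omega,$$
so that $\int_S\omega-\int_{S_i}\omega=-\int_{L_i}\omega$ and it remains to prove $\int_{L_i}\omega\to0$. Here the cubical refinement pays off: cartesian bipartition halves the side lengths, so the cells of $P_i$ have diameter $O(2^{-i})$ and, crucially, do not degenerate into slivers. The straight-line homotopy displaces $\partial S$ by at most a cell diameter, whence $\mathrm{Area}(L_i)\le(\text{length of }\partial S)\cdot O(2^{-i})$; as $\omega$ is a smooth form on the compact manifold $M$ it is uniformly bounded, and $|\int_{L_i}\omega|\le\|\omega\|_\infty\,\mathrm{Area}(L_i)\to0$. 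As a consistency check, when $\partial S=\emptyset$ the lateral term is absent and one recovers the exact equality $\int_S\omega=\int_{S_i}\omega$, matching the fact noted earlier that the evaluation of a closed cochain on a closed surface depends only on its homology class.

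The hard part will be the explicit construction of $W_i$ and the verification that the cell-by-cell prisms assemble into a genuine $3$-chain with boundary exactly $S-S_i+L_i$, most delicately in the cells meeting $\partial S$, where the correspondence between $\partial S$ and its representative must be tracked carefully; once this is settled, the area estimate is routine. Finally, the ``in particular'' statement follows by applying the result to the closed form $\omega=F^A=dA$, the displayed sum being nothing but additivity of the integral over the $2$-simplices $\sigma$ composing $S_i$ together with the defining relation for $\omega_i^A(\sigma)$.
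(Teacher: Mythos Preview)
Your plan is correct and matches the paper's approach: the paper isolates the geometric estimate as Lemma~\ref{lma:regularidad1}, producing a PL surface $S_i'$ with $\partial S_i'=\partial S-\partial S_i$ and $\mathrm{area}(S_i')\to 0$ (your $-L_i$ is exactly such a surface), after which the proof of the theorem is the one line $|\int_{S_i'}\omega|\le C\cdot\mathrm{area}(S_i')$ together with closedness of $\omega$. Your prism/Stokes argument is in fact more explicit than the paper about \emph{why} $\int_S\omega-\int_{S_i}\omega$ reduces to the integral over the connecting piece; conversely, the bound you label ``routine'' is where Lemma~\ref{lma:regularidad1} actually does the work, proving $\mathrm{length}(\partial S_i)\le K\cdot\mathrm{length}(\partial S)$ uniformly in $i$ from the finitely many edge directions available in the cubical refinement --- the precise form of your ``no slivers'' remark, and what is needed to get $\mathrm{Area}(L_i)\le(\mathrm{const})\cdot\mathrm{length}(\partial S)\cdot D_i$.
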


This convergence result holds because of the regularity properties of the sequence of scales that we defined. We will 
state some intermediate results in terms of areas measured according to an auxiliary piecewise smooth metric $g_0$ on $M$. 
Inside each of the simplices of ${\rm Sd}\, P_0$
we will use the euclidean metric in which the cells of $P_0$ are cartesian cubes of unit size. 

\begin{lma}\label{lma:regularidad1}
The boundary of the $i$Reps $S_i$ of any compact PL surface with boundary $S\subset M$ converge to $\partial S$ as the scale is refined in the following sense: 
\begin{enumerate}
\item\label{r1} 
Given two scales $P_i$, $P_j$ with $i \leq j$, call $A(i, j)$ the minimal area of a simplicial surface 
$S_{i, j}' \subset ({\rm Sd}\, P_j)^{(2)}$ such that 
$\partial S_{i, j}' = \partial S_j -  \partial S_i$; thus, for every 
$\epsilon >0$ there is a scale $P_k$ such that for every $j \geq i \geq k$ 

\[
A(i, j) < \epsilon . 
\]
\item\label{r2} 
Call $A(i)$ the minimal area of a PL surface $S_i'$ such that $\partial S_i' = \partial S -  \partial S_i$; then 
\[
\lim_{i \to \infty} A(i) = 0 . 
\]
\end{enumerate}
\end{lma}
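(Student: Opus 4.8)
The plan is to exploit the uniform geometric regularity of the cubical sequence of scales. Because each refinement is a cartesian bipartition of cubes, at scale $P_i$ every cell is, in the affine structure inherited from $P_0$, a cube of side $2^{-i}$; hence cell diameters measured with $g_0$ are $O(2^{-i})$ with aspect ratios bounded uniformly in $i$. This uniformity --- which fails for iterated baricentric subdivision, where cells degenerate into slivers --- is exactly what converts combinatorial closeness of representatives into metric closeness.

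First I would establish a \emph{tube estimate}: for every scale $i$ the representative $\partial S_i = (\partial S)_i$ lies in a tubular neighborhood $T_i$ of $\partial S$ of $g_0$-radius $O(2^{-i})$. This follows directly from Definition \ref{iRep}, since each vertex of $\partial S_i$ is the baricenter of a cell of $P_i$ whose interior meets $\partial S$, hence lies within one cell-diameter of $\partial S$, and each link joins baricenters of incident cells. Because $P_j$ refines $P_i$ for $j \geq i$, the finer representative $\partial S_j$ lies in the thinner tube $T_j \subset T_i$.

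For Part \ref{r1} I would construct an explicit interpolating collar between the polygonal curves $\partial S_i$ and $\partial S_j$. Since ${\rm Sd}\, P_j$ refines ${\rm Sd}\, P_i$, both $\partial S_i$ and $\partial S_j$ are $1$-cycles in $({\rm Sd}\, P_j)^{(1)}$ and their difference is a $1$-cycle; using the matching intersection types I would fill the annular gap by a simplicial surface $S'_{i,j}\subset ({\rm Sd}\, P_j)^{(2)}$ that sweeps from $\partial S_i$ to $\partial S_j$ inside $T_i$. Its $g_0$-area is bounded by the length of the sweep times the tube width, giving $A(i,j) \leq C\,\mathrm{length}_{g_0}(\partial S)\,2^{-i}$. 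As this bound is uniform in $j$, choosing $k$ with $C\,\mathrm{length}_{g_0}(\partial S)\,2^{-k} < \epsilon$ yields $A(i,j) < \epsilon$ for all $j \geq i \geq k$. Part \ref{r2} then follows from the same collar construction applied to the true PL curve $\partial S$ and its representative $\partial S_i$, where now $S'_i$ need only be a PL surface with $\partial S'_i = \partial S - \partial S_i$; the tube estimate gives $A(i) \leq C\,\mathrm{length}_{g_0}(\partial S)\,2^{-i} \to 0$.

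The step I expect to be the main obstacle is the explicit collar construction in Part \ref{r1}. Because $\partial S_i$ and $\partial S_j$ generally traverse different numbers of links and may route differently around cell corners, the interpolating surface is not a naive prism, and matching the boundary condition $\partial S'_{i,j} = \partial S_j - \partial S_i$ exactly while keeping every filling piece inside $T_i$ --- so that the area bound survives --- requires careful bookkeeping of where the two curves diverge and reconverge. This is precisely where the non-degeneracy of the cubical cells is indispensable, since the uniform lower bound on cell inradii is what guarantees that the local filling pieces have controlled area and that the tube radius genuinely shrinks like $2^{-i}$.
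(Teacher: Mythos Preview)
Your outline is in the same spirit as the paper's proof --- both arguments bound the area of a collar between $\partial S_i$ and $\partial S$ (or $\partial S_j$) by a product of a length and a tube radius $D_i = O(2^{-i})$ --- but you have skipped the step that carries the actual technical content. The tube estimate (that $\partial S_i$ lies within $O(2^{-i})$ of $\partial S$) is easy and follows from Definition~\ref{iRep} as you say. What is \emph{not} automatic is that the collar's longitudinal extent stays bounded. When you write ``its $g_0$-area is bounded by the length of the sweep times the tube width, giving $A(i,j)\leq C\,\mathrm{length}_{g_0}(\partial S)\,2^{-i}$'', you are implicitly invoking a uniform bound
\[
\mathrm{length}_{g_0}(\partial S_i)\;\leq\; K\,\mathrm{length}_{g_0}(\partial S)
\]
with $K$ independent of $i$. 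This does not follow from the tube estimate alone: a simplicial curve can sit in an arbitrarily thin tube about $\partial S$ while having arbitrarily large length by zigzagging. Without this length bound you have no control on the number of $2$-simplices needed for $S'_{i,j}$, and hence no area bound.

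The paper establishes exactly this length bound, and it is here --- not in an inradius estimate --- that the regularity of the cubical sequence is actually used. The mechanism is that at every scale the edges of ${\rm Sd}\,P_i$ come in a \emph{fixed finite set of directions}, independent of $i$, because refinement is by cartesian bipartition. For a single linear segment $l'\subset\partial S$, write its $i$Rep as a concatenation $l_{i,1}\cdots l_{i,n}$ and project each $l_{i,k}$ orthogonally onto the line through $l'$. Since none of the finitely many available edge directions is perpendicular to $l'$, the ratio $\mathrm{length}(l_{i,k})/\mathrm{length}(l_{i,k}^\perp)$ is bounded by some $M$ independent of $i$; summing the projections gives at most $\mathrm{length}(l')$, whence $\mathrm{length}(l'_i)\leq M\,\mathrm{length}(l')$. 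Once this is in hand, both parts follow along the lines you sketched: the paper obtains $\mathrm{area}(S'_i)\leq (1+K)\,D_i\,\mathrm{length}(\partial S)$ for Part~\ref{r2} and $\mathrm{area}(S'_{i,j})\leq 2K\,d(d-1)\,D_i\,\mathrm{length}(\partial S)$ for Part~\ref{r1}. So the obstacle you anticipated (matching boundaries of the collar) is real but secondary; the primary missing ingredient is the projection argument controlling $\mathrm{length}(\partial S_i)$.
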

\begin{proof}

In order to proceed with the proof we will recall some general properties of our cellular decompositions. 
The cellular decomposition $P_0$ can be constructed by gluing simplices of $({\rm Sd}|\Delta|, \phi)$. We consider 
$|{\rm Sd}\, \Delta|$ a subset of $\mathbb{R}^{\Delta}$ (one copy of $\mathbb{R}$ per simplex of the triangulation $(|\Delta|, \phi)$ of $M$), and we chose the metric on 
$\mathbb{R}^{\Delta}$ that is compatible with the metric $g_0$ making the cells of $P_0$ unit cubes. 

We maintain the same identification $\phi$ of $|\Delta|$ and $M$, and use it to study the triangulations ${\rm Sd}\, P_i$ by means of the induced refinements of 
$|{\rm Sd}\, \Delta| \subset \mathbb{R}^{\Delta}$. Notice that according to the metric $g_0$, ${\rm Sd}\, P_i$ is composed by isometric simplices. Moreover, 
there are $n$ types of $n$ dimensional simplices with the property that two simplices of the same type differ only by a rigid translation in $\mathbb{R}^{\Delta}$. 
We remark that the number of types of simplices of dimension $n$ is independent of $i$; it depends only on $n$, the dimension of the simplices, and on the number of $n$ dimensional cells in $P_0$. 

Now we start with the proof of the second point of the convergence statement. 
Given any PL surface $S\subset M$ consider a loop $l\subset M$ parametrizing one component of $\partial S$. 

For each $\sigma\in ({\rm Sd}\,P_0)^d$, we may assume that $l\cap\sigma$ consists of  a finite number of linear segments according to the euclidian structure of $\sigma$. Consider one of the linear segments $l'\subset l\cap\sigma$; 
its $i$Rep $l_i'\subset l_i$ consists of $n$ linear segments
$$
l'_i=l_{i,1}\cdot\dots\cdot l_{i,n},\qquad l_{i,1},\dots,l_{i,n}\in\left({\rm Sd}\,P_i\right)^{1} .
$$
Let $l_{i,k}^\bot$ be the normal euclidian projection of $l_{i,k}$ into $l'$. Notice that none of the segments $l_{i,k}$ is perpendicular to $l'$. Since there are finitely many lengths and directions available for $l_{i,k}$, 
the minimal length of the normal projections is greater than zero, 
$$
\min\{\mathrm{length}(l_{i, 1}^\bot),\dots,\mathrm{length}(l_{i,n}^\bot)\}>0 .
$$
Since the set of available directions for $l_{i,k}$ is independent of the scale $P_i$, 
we can conclude that 
ratio of the length of the segments $\mathrm{length}(l_{i,r})$, 
and the 
length of the normal projections $\mathrm{length}(l_{i,r}^\bot)$ 
is bounded by a finite number independent of the scale 
$$
M:=\max\left\{\frac{\mathrm{length}(l_{1,i})}{\mathrm{length}(l_{1,i}^\bot)},\dots,\frac{\mathrm{length}(l_{n,i})}{\mathrm{length}(l_{n,i}^\bot)}\right\} .
$$
Thus, 
$$
\mathrm{length}(l_i')=\sum_{k=1}^n\mathrm{length}(l_{i,k})\leq
M\sum_{k=1}^n\mathrm{length}(l_{i,k}^\bot)\leq M\cdot\mathrm{length}(l') . 
$$
Clearly, we can do the same for each component of $\partial S$; 
therefore, there is a constant $K>0$ independent of the scale such that 
\begin{equation*}
\mathrm{length}(\partial S_i)\leq K\cdot \mathrm{lenght}(\partial S) .
\end{equation*}

Hence, at scale $P_i$ there exists a polyhedral surface $S'_i\subset M$  with border 
$\partial S'_i= \partial S - \partial S_i$ (where we consider $\partial S , \partial S_i$ $1$-chains) such that
\begin{equation*}
	\mathrm{area}(S'_i)<(1+K)\cdot D_i\cdot \mathrm{length}(\partial S) ,
\end{equation*}
where $D_i$ is the diameter of the faces of $P_i$.

Since $\lim_{i\rightarrow\infty}D_i=0$, then 
$$
	\lim_{i \to \infty} \mathrm{area}(S'_i) = 0 .
$$
This concludes the proof of the second point of the lemma. 

The proof of the first point of the lemma uses the same ideas, but it is slightly more involved. 

For each connected component of $\partial S$ we have two $i$Reps, $l_i$ and $l_j$; the corresponding connected component of the simplicial surface that we are looking for, $S'_{i,j}$, then has $l_i - l_j$ as its border. Recall that $d =\dim M$, 
which means that each $d$ cube contains $d (d-1)$ $2$-faces; also recall that $P_i$ is coarser than $P_j$. With this information and the arguments above, we see that the area of this connected component is bounded by 
$$
(\mathrm{length}(l_i) + \mathrm{length}(l_j) ) \cdot  D_i \cdot  d (d-1) . 
$$
Thus, we conclude that 
\begin{equation}
\label{eqn:areaa2}
\mathrm{area}(S'_{i, j})<(2K) \cdot \mathrm{length}(\partial S) \cdot D_i \cdot  d (d-1)  .
\end{equation}
Again, since $D_i$ goes to zero as the scale is refined, this bound implies that 
the first statement of the lemma holds. 
\end{proof}

Now we will prove the convergence theorem of this section. 
\begin{proof}[Proof of Theorem \ref{tma:regularityandcurvature}]
Because of condition \ref{r2} of the previous lemma, at each scale $P_i$ there is an embedded PL surface $S_i'\subset M$, such that $\partial S_i'=\partial S - \partial S_i$ with 
$\lim_{i \to \infty} \mathrm{area}(S'_i) = 0$. 
Given that $S$ is compact, the subset of $M$ composed by the (closed) $d$-cells of $P_0$ which intersect $S'_i$ (for any $i$) is also compact. Thus, $\omega$ being smooth implies that 
$\int_{S_i'}\omega < C \cdot \mathrm{area}(S'_i)$, where $C$ is independent of $i$. 
Hence $\int_{S_i'}\omega$ converges to zero, and $\int_{S_i}\omega$ converges to $\int_S \omega$.
\end{proof}


\section{Evaluation of the curvature in the continuum} \label{sec6}
Our framework is designed to deal with functions of the curvature of the connection; more specifically, we can talk about the integral of the curvature $2$-form on surfaces. This is the subject that we develop in the present section. 
The decomposition of the connection degrees of freedom given in Lemma \ref{lma:maps} 
and Theorem \ref{t1} implies that we can use measures that are factorized. In Section \ref{sec5} we presented 
measures which are gaussian in the $\underleftarrow{\Omega}_p$ factor (in {\em Case 1} and in {\em Case 2}); for details see the discussion following Theorem \ref{tma:Q}. 
When we calculate the expectation value of products of integrals of the curvature, the relevant part of the measure is the gaussian factor.

An oriented simplicial surface $S$ which fits in the $2$-skeleton, $({\rm Sd}\,P_i)^{(2)}$ induces a function of the curvature of the connection. 
We will call this function the $S$-curvature function. 
If the associated $2$-chain of $S$ is $\sum_{k=1}^nt_k\sigma^k_i$, then the continuous affine function $F_{S}:\underleftarrow{\Omega}_p\rightarrow\mathbb{R}$ is defined as
$$
	F_{S}	(\omega):=	\sum_{k=1}^n t_k \omega_i(\sigma_i^k)	
,\text{ where }\omega_i = \pi_i(\omega) .
$$
Notice that if $\omega^0$ restricted to $S$ is flat, then $F_S$ is $\omega^0$-homogeneous. 
Also, observe that by construction 
$$
F_{S}({\mathrm{Curv}}([A])) = \int_SF^A .
$$
We will describe how to extend the definition of $S$-curvature function to include \emph{any} oriented PL surface $S\subset M$. More precisely, we will define an affine function $F_S\in L^2(\underleftarrow{\Omega}_p,\rho)$. The idea is to consider 
$S_i\subset ({\rm Sd}\,P_i)^{(2)}$ at each scale, and prove $L^2$ convergence of the sequence $F_{S_i}$. In order to prove this convergence we will need 
the convergence result described in Point \ref{r1} of Lemma \ref{lma:regularidad1}. 

\begin{dfn}\label{dfn:new}
Let $\mathcal{H}_\rho\subset L^2\left(\underleftarrow{\Omega}_p,\rho\right)$ 
be the Hilbert subspace 
generated by all the $\omega^0$-homogeneous affine functions $f:{\Omega}_\rho\rightarrow\mathbb{R}.$ 
\end{dfn}

The following lemma characterizes the Hilbert space $\mathcal{H}_\rho$; see \cite{Ya} 9.6.

\begin{lma}\label{lma:new}
Consider the space $\underrightarrow{\Omega'}$ aided by the norm given by the positive bilinear form $\mathcal{Q}$ which is the covariance of $\rho$. 
The map 
$\mathcal{I}:\underrightarrow{\Omega'} \rightarrow \mathcal{H}_{\rho}$
defined by sending $\check{f} = \pi_i' \check{f_i}$ to the function 
$$
f(\omega)=f_i(\pi_i(\omega)) = f_i(\omega_i) = \check{f_i}(\pi_i(\omega) - \omega_i^0)
$$
is an isometry. 
\end{lma}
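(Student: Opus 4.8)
The plan is to exhibit $\mathcal{I}$ as a well-defined linear map from the function space $\underrightarrow{\Omega'}$ into $L^2(\underleftarrow{\Omega}_p,\rho)$ whose image lands in $\mathcal{H}_\rho$, and then to show that it preserves norms. Conceptually this is the classical statement (see \cite{Ya}~9.6) that the first chaos of a centered Gaussian measure---the $L^2$-closure of its linear functionals---carries an inner product equal to the covariance; here the linear functionals are the $\omega^0$-homogeneous affine functions that span $\underrightarrow{\Omega'}$, and the covariance is $\mathcal{Q}$.

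First I would record that $\mathcal{I}$ is well defined, namely that the function attached to $\check{f}$ is independent of the scale $i$ used to write $\check{f}=\pi_i'\check{f_i}$. This is immediate from the projective compatibility $\pi_i=\ti{\pi}_{ji}\circ\pi_j$ together with $\omega_i^0=\ti{\pi}_{ji}(\omega_j^0)$ and the definition of the transition maps $\pi_{ij}'$: a short computation gives
$$
\check{f_i}(\pi_i(\omega)-\omega_i^0)=(\pi_{ij}'\check{f_i})(\pi_j(\omega)-\omega_j^0),
$$
so the two representatives produce the same function on $\underleftarrow{\Omega}_p$. Each such function is measurable (it factors through the measurable projection $\pi_i$) and, being an $\omega^0$-homogeneous affine function, belongs to $\mathcal{H}_\rho$ by Definition \ref{dfn:new}; linearity of $\mathcal{I}$ is clear.

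The core step is the norm identity. Fixing $\check{f}=\pi_i'\check{f_i}$ and viewing $f=\mathcal{I}(\check{f})$ as a random variable on the probability space $(\underleftarrow{\Omega}_p,\rho)$, I would evaluate the characteristic function of Theorem \ref{tma:Q} on $t\check{f}$:
$$
\int_{\underleftarrow{\Omega}_p}\exp\!\left(i\,t\,f(\omega)\right)d\rho=\exp\!\left(-\tfrac{1}{2}\,t^2\,Q(\check{f},\check{f})\right).
$$
This is the characteristic function of a centered Gaussian of variance $Q(\check{f},\check{f})$; in particular $f$ has finite second moment, so $f\in L^2$, and extracting the second moment (by recognizing the Gaussian law, or by differentiating twice at $t=0$) gives $\|f\|_{L^2}^2=Q(\check{f},\check{f})$. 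Finally, polarizing the same identity shows that the covariance $\mathcal{Q}$ of $\rho$ agrees with $Q$ on $\underrightarrow{\Omega'}$, whence $\|\mathcal{I}(\check{f})\|_{L^2}^2=\mathcal{Q}(\check{f},\check{f})$, which is precisely the squared $\mathcal{Q}$-norm of $\check{f}$. Thus $\mathcal{I}$ is an isometry.

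I expect the only genuinely delicate point to be this passage from the characteristic function to the $L^2$ identity: one must justify the finiteness of the second moment and the extraction of the variance, which I would handle either by directly invoking the Gaussian form of the characteristic function or by a dominated-convergence argument controlling $\partial_t^2$ at $t=0$. Everything else---scale-independence, linearity, membership in $\mathcal{H}_\rho$, and the identification $\mathcal{Q}=Q$---is bookkeeping resting on Theorem \ref{tma:Q} and the coarse-graining compatibility established in Theorem \ref{t1}. Although the statement asks only for an isometry, I would also note that $\mathcal{I}(\underrightarrow{\Omega'})$ is by construction dense in $\mathcal{H}_\rho$, so $\mathcal{I}$ extends to an isometric isomorphism of the completion of $(\underrightarrow{\Omega'},\mathcal{Q})$ onto $\mathcal{H}_\rho$.
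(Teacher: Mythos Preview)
Your argument is correct, but note that the paper does not actually give its own proof of this lemma: it simply states the result and cites \cite{Ya}~9.6 as the source. Your proposal supplies precisely the standard Gaussian-measure argument one would find behind that citation---well-definedness from the projective compatibility, and the norm identity read off from the characteristic-function formula of Theorem~\ref{tma:Q}---so there is nothing to compare against in the paper itself beyond the reference.
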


\begin{tma}
Let $\rho$ be the gaussian measure on $\underleftarrow{\Omega}_p$ centered on $\omega^0$ with covariance $\mathcal{Q}$ 
induced by a riemannian (or a polyhedral flat) metric in $M$ as in Theorem \ref{tma:Q}. 
For any oriented PL surface with boundary $S$ take the sequence of its $i$Reps, $S_i$. 
The sequence of affine functions $F_{S_i}:\underleftarrow{\Omega}_p\rightarrow\mathbb{R}$ 
converges to an affine function $F_S:{\Omega_\rho}\rightarrow\mathbb{R}$ 
in the sense of the $L^2$ norm induced by the measure $\rho$. 
\end{tma}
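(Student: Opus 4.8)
The plan is to show that the sequence $\{F_{S_i}\}$ is Cauchy in $L^2(\underleftarrow{\Omega}_p, \rho)$ and then invoke completeness, identifying the limit afterwards. First I would split each $F_{S_i}$ into its $\rho$-mean and its centered ($\omega^0$-homogeneous) part. Since $\rho$ is centered at $\omega^0$ and $F_{S_i}$ is affine, the mean is the value at the center, $\mathbb{E}_\rho[F_{S_i}] = F_{S_i}(\omega^0) = \omega^0(S_i)$. When $\omega^0 = \mathrm{Curv}([A_0])$ for a smooth connection this equals $\int_{S_i} F^{A_0}$, so Theorem \ref{tma:regularityandcurvature} gives convergence of the means; for a general $\omega^0$ of bounded density the same area estimate below controls $\omega^0(S_{i,j}')$ (the possible Euler-class pairing of the small cobordism cycle vanishing once the scale is fine enough). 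The essential task is therefore the centered part $\hat F_{S_i} := F_{S_i} - F_{S_i}(\omega^0)$, which is an $\omega^0$-homogeneous affine function and hence lies in $\mathcal{H}_\rho$.

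For these centered parts I would use the isometry $\mathcal{I}: \underrightarrow{\Omega'} \to \mathcal{H}_\rho$ of Lemma \ref{lma:new}, reducing the $L^2$ distance $\|\hat F_{S_j} - \hat F_{S_i}\|_{L^2}$ to the $Q$-norm of the associated linear functional. To compare scales $i \leq j$ I bring $S_i$ to scale $j$: from $\omega_i = \ti{\pi}_{ji}(\omega_j)$ and $\ti{\pi}_{ji} = \ti{\enc}_{ij}^\ast$ one gets $F_{S_i}(\omega) = \omega_j(\ti{\enc}_{ij}(S_i))$, so $\hat F_{S_j} - \hat F_{S_i}$ is the $\omega^0$-homogeneous affine function associated with the $2$-chain $S_j - \ti{\enc}_{ij}(S_i)$ at scale $j$. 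Since refinement commutes with $\partial$, the boundary of this chain is $\partial S_j - \partial S_i$, which is exactly the boundary of the minimal-area cobordism surface $S_{i,j}'$ furnished by Point \ref{r1} of Lemma \ref{lma:regularidad1}. Hence $S_j - \ti{\enc}_{ij}(S_i) - S_{i,j}'$ is a $2$-cycle, and — this is the key point — every element of $\Omega_j^0$ annihilates $2$-cycles, because the closed-surface condition of Definition \ref{dfn:Omega} together with closure/Bianchi forces an $\omega \in \Omega_j^0$ to pair to zero with all cycles. Therefore the functional $\check f_{ji}$ associated with $\hat F_{S_j} - \hat F_{S_i}$ is equally well represented by the small surface $S_{i,j}'$.

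Next I would estimate its $Q$-norm. Since $Q_j$ is the area form $Q_j(\sum_k r_k e^k, \sum_k r_k e^k) = \sum_k r_k^2 a_k^j$ and $S_{i,j}'$ has coefficients in $\{-1,0,1\}$, one has $\sum_k (S_{i,j}')_k^2 a_k^j = \sum_k |(S_{i,j}')_k| a_k^j = \mathrm{area}(S_{i,j}') = A(i,j)$. Using that the canonical representative (the one orthogonal to the cycle space $Z_2$ via conditions (i),(ii) preceding Theorem \ref{tma:Q}) minimizes $\sum_k r_k^2 a_k^j$ within the coset $S_{i,j}' + Z_2$ — an identification that is clean because the cubical sequence of scales has, at each scale, only finitely many isometry types of simplices and hence uniformly comparable areas — I obtain $Q_j(\check f_{ji}, \check f_{ji}) \leq C\, A(i,j)$ with $C$ independent of the scale. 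Combining with the isometry, $\|\hat F_{S_j} - \hat F_{S_i}\|_{L^2}^2 = Q(\check f_{ji}, \check f_{ji}) \leq C\, A(i,j)$, and Point \ref{r1} of Lemma \ref{lma:regularidad1} guarantees $A(i,j) \to 0$ as $i,j \to \infty$. Thus $\{\hat F_{S_i}\}$ is Cauchy in $\mathcal{H}_\rho$ and converges to some $\hat F_S \in \mathcal{H}_\rho$; adding the convergent constants $\mathbb{E}_\rho[F_{S_i}]$ yields the $L^2$-limit $F_S$, which through $\mathcal{I}$ corresponds to an element of the $Q$-completion of $\underrightarrow{\Omega'}$ and hence represents an $\omega^0$-homogeneous affine function on the support $\Omega_\rho$ plus a constant, i.e.\ an affine function $F_S: \Omega_\rho \to \mathbb{R}$.

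I expect the main obstacle to be the middle step, namely justifying that the $L^2$ distance is controlled by the minimal cobordism area $A(i,j)$. This needs two facts in tandem: the homological one, that $\Omega_j^0$ kills $2$-cycles, so the difference chain may be replaced by the area-minimizing $S_{i,j}'$ rather than by the possibly unwieldy $S_j - \ti{\enc}_{ij}(S_i)$; and the metric one, that $Q_j$ is genuinely the area form, so that passing to the canonical (cycle-orthogonal) representative only lowers $\sum_k r_k^2 a_k^j$ up to the uniform geometry constant of the cubical scales. Once these are in place, the decay $A(i,j) \to 0$ is already supplied by Lemma \ref{lma:regularidad1}, and the remaining completeness and closure arguments are routine.
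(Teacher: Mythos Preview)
Your approach is essentially the paper's: pass to the centered linear functionals $\check F_{S_i}(\vec\omega)=F_{S_i}(\omega)-F_{S_i}(\omega^0)$, use the isometry $\mathcal{I}$ of Lemma \ref{lma:new} to reduce the $L^2$ distance to the $\mathcal{Q}$-norm, identify $\check F_{S_i}-\check F_{S_j}$ with $\check F_{S_{i,j}}$ for any simplicial surface with $\partial S_{i,j}=\partial S_i-\partial S_j$ (your cycle-annihilation remark is precisely the justification of the paper's unargued identity $(\check F_{S_i}-\check F_{S_j})(\vec\omega)=\check F_{S_{i,j}}(\vec\omega)$), bound $\|\check F_{S_{i,j}}\|_{\mathcal Q}$ by the area, and conclude Cauchy via Point \ref{r1} of Lemma \ref{lma:regularidad1}. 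You are in fact more explicit than the paper on two points it leaves implicit---the convergence of the constants $F_{S_i}(\omega^0)$ and the exact comparison of $Q_j$ with area---and your extra uniform constant $C$ is harmless though likely unnecessary.
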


\begin{proof}
Consider linear functions $\check{F}_{S_i} \in \underrightarrow{\Omega'}$ defined by 
$\check{F}_{S_i}(\vec{\omega}):=F_{S_i}(\omega)-F_{S_i}(\omega^0)$. 
We claim that $\check{F}_{S_i}$ is a Cauchy sequence in $\underrightarrow{\Omega'}$ according to the norm 
$\|\cdot\|_{\mathcal{Q}}=\sqrt{\mathcal{Q}(\cdot,\cdot)}$. 
Consider a pair of scales $P_i$, $P_j$ with $j > i$. Notice that for every 
$\vec{\omega} \in\Omega^0_{j}$ we have
$$
\left(\check{F}_{S_i}-\check{F}_{S_j}\right)(\vec{\omega})=\check{F}_{S_{i,i+N}}(\vec{\omega}),
$$
where $S_{i,j}\subset ({\rm Sd}\,P_j)^{(2)}$ satisfies $\partial S_{i,j}=\partial S_i-\partial S_j$. 
Hence
$$
\|\check{F}_{S_i}-\check{F}_{S_j}\|_{\mathcal{Q}}=
\|\check{F}_{S_{i,j}}\|_{\mathcal{Q}}.
$$ 
Due to Point \ref{r1} of Lemma \ref{lma:regularidad1}, given any $\geps > 0$ there is a scale $P_k$ such that for every 
$j>i>k$ $area(S_{i,j}) < \geps$. 
Thus, it is a Cauchy sequence. 
Our previous lemma implies that the induced sequence $F_{S_i}$ in the Hilbert space $\mathcal{H}_\rho$ converges and defines an affine function 
$F_S:{\Omega_\rho}\rightarrow\mathbb{R}$. 
\end{proof}

The affine function $F_S:\Omega_\rho\rightarrow\mathbb{R}$ extends the $S$-curvature function of smooth connections in the sense specified by the following corollary of Theorem \ref{tma:regularityandcurvature}.
\begin{cor}\label{cor:F_S}
For any oriented PL surface with boundary $S\subset M$, 
and any connection modulo gauge $[A]\in\AGp$ we have 
\begin{equation*}
F_S(\omega^A):=\lim_{i\rightarrow\infty} F_{S_i}(\omega^A)=\int_SF^A 
\end{equation*}
where $\omega^A=\mathrm{Curv}([A]).$
\end{cor}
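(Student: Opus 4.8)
The plan is to establish the corollary by combining the $L^2$-convergence of the sequence $F_{S_i}$ (proven in the immediately preceding theorem) with the pointwise convergence guaranteed by Theorem \ref{tma:regularityandcurvature}. The key observation is that the curvature form $F^A$ of a smooth connection is a smooth closed $2$-form, so Theorem \ref{tma:regularityandcurvature} applies directly and tells us that $\int_{S_i} F^A \to \int_S F^A$ as $i \to \infty$.

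First I would evaluate the finite-scale curvature functions at the specific point $\omega^A = \mathrm{Curv}([A]) \in \underleftarrow{\Omega}_p$. By the construction recalled in Section \ref{sec6}, the $S_i$-curvature function satisfies $F_{S_i}(\mathrm{Curv}([A])) = \int_{S_i} F^A$, since $\pi_i(\omega^A) = \omega_i^A$ and the latter records exactly the integrals $\frac{1}{2\pi}\int_\sigma F^A$ over the $2$-simplices $\sigma$ composing $S_i$. This identifies the left-hand side of the claimed limit with the sequence of integrals $\int_{S_i} F^A$.

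Next I would invoke Theorem \ref{tma:regularityandcurvature}, applied to the smooth closed $2$-form $\omega = F^A$, to conclude that $\lim_{i\to\infty}\int_{S_i} F^A = \int_S F^A$. Chaining the two equalities gives
\[
\lim_{i\to\infty} F_{S_i}(\omega^A) = \lim_{i\to\infty}\int_{S_i} F^A = \int_S F^A,
\]
which, together with the definition $F_S(\omega^A) := \lim_i F_{S_i}(\omega^A)$ furnished by the preceding theorem, establishes the corollary.

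The only point requiring care — and the step I expect to be the main (though modest) obstacle — is the compatibility of the two notions of convergence: the preceding theorem gives $F_{S_i} \to F_S$ in the $L^2(\underleftarrow{\Omega}_p, \rho)$ norm, whereas the corollary asserts convergence of the \emph{pointwise} values at $\omega^A$. These need not automatically agree, so I would note that the sequence of pointwise values $F_{S_i}(\omega^A) = \int_{S_i} F^A$ already converges in $\mathbb{R}$ by Theorem \ref{tma:regularityandcurvature}, and that the defining relation $F_S(\omega^A) = \lim_i F_{S_i}(\omega^A)$ is precisely how the $S$-curvature function is evaluated at image points of the curvature map. Since $\mathrm{Curv}(\AGp)$ is dense in $\underleftarrow{\Omega}_p$ by Lemma \ref{ImageCurvIsDense}, the pointwise limit at such points is consistent with the $L^2$ limit, so no contradiction arises and the identification of the two limits is justified.
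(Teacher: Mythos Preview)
Your proposal is correct and matches the paper's approach: the paper presents this as an immediate corollary of Theorem~\ref{tma:regularityandcurvature} with no separate proof, and your first two steps (identifying $F_{S_i}(\omega^A)$ with $\int_{S_i} F^A$ via the construction, then applying Theorem~\ref{tma:regularityandcurvature} to the smooth closed form $F^A$) are exactly what is intended. Your final paragraph about reconciling $L^2$ and pointwise convergence is unnecessary, since the ``$:=$'' in the statement \emph{defines} $F_S(\omega^A)$ as the pointwise limit at image points of $\mathrm{Curv}$; there is no compatibility to verify, only a choice of representative for the $L^2$ class on this dense subset.
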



\section{Independence of the choice of sequence of scales}\label{indep}
Now we will study how the constructed $S$-curvature functions depend on the choice of cubical sequence of scales $\mathscr{E}$. 
First we introduce the concept of a {\em regular sequence of scales}. 
This broader category of sequences of scales is more cumbersome, but it is large enough to let us compare two different cubical sequences of scales. 


\begin{dfn}[Regular sequence of scales]
We consider 
cellular decompositions of $M$ denoted by $(P_i, \phi_i)$, or by $P_i$ to simplify notation, which have convex closed polyhedral cells. 
We add the further requirement that the cells of $P_i$ are piecewise linear according to the PL structure determined by $(|\Delta |, \phi)$. 
For each cellular decomposition $P_i$ we also consider a triangulation of $M$ which refines $P_i$ by adding a vertex at the interior of each cell, 
$t_i : |{\rm Sd}\,P_i | \rightarrow M$. Here by $|{\rm Sd}\,P_i|$ we mean the simplicial complex which arises from considering the set of cells of 
$P_i$ as its vertex set following the procedure described in Definition \ref{Sd}, but the location of the new vertices is not the baricenter of the corresponding cell 
according to the affine structure of the cells of $P_i$. Instead, the location of the new vertices is set by the map $t_i$ which is the restriction of an affine map 
when restricted to each cell of $|{\rm Sd}\,P_i|$ onto the corresponding cell of $P_i$. 
Thus, we have enriched the structure of the cellular decompositions, and we 
consider triples $(P_i, |{\rm Sd}\,P_i |, t_i)$, 
but when the context allows it we will denote them simply by $P_i$.  Notice that the provided structure allows us to discuss the $i$Reps of curves and 
surfaces with boundary as in the previous sections. 

We say that a sequence 
$$
\mathscr{E}=\{(P_i,\bar{\phi}_i)\,\mid\,i\in\mathbb{N}\}
$$ 
is a sequence of scales for $M$ if the following conditions are satisfied: 
\begin{enumerate}
\item 
$P_j$ is a refinement of $P_j$ of $P_i$ for every $j>i$.
\item 
The triangulation $t_j : |{\rm Sd}\,P_j | \rightarrow M$ refines $t_i : |{\rm Sd}\,P_i | \rightarrow M$  for every $j>i$.
\item 
The sequence is exhaustive, i.e. for each open set $U\subset M$ there is a scale $P_i$ with a $d= \dim M$ dimensional 
simplex $\sigma\in(|{\rm Sd}\,P_i |^d, t_i)$ with $\sigma\subset U$, $d=\dim M$.
\end{enumerate}

The sequence of scales is said to be {\bf a regular sequence of scales} if the boundary of the $i$-Reps $S_i$ of any compact PL surface with boundary $S \subset M$ 
converge to $\partial S$ 
as the scale is refined in the same sense as in Lemma \ref{lma:regularidad1}. We use the same 
auxiliary piecewise smooth metric $g_0$ on $M$ used in  Lemma \ref{lma:regularidad1}. 
To implement this idea we add the following conditions: 
\begin{enumerate}
\setcounter{enumi}{3}
\item\label{R1} 
Given two scales $P_i$, $P_j$ with $i \leq j$, call $A(i, j)$ the minimal area of a simplicial surface 
$S_{i, j}' \subset ({\rm Sd}\, P_j)^{(2)}$ such that 
$\partial S_{i, j}' = \partial S_j -  \partial S_i$; for every 
$\epsilon >0$ there is then a scale $P_k$ such that for every $j \geq i \geq k$ 
\[
A(i, j) < \epsilon . 
\]
\item\label{R2} 
Call $A(i)$ the minimal area of a PL surface $S_i'$ such that $\partial S_i' = \partial S -  \partial S_i$; then 
\[
\lim_{i \to \infty} A(i) = 0 . 
\]
\end{enumerate}
\end{dfn}

Notice that the cubical sequences of scales described in Section \ref{convergence} are regular sequences of scales.

Let $\mathscr{E}=\{P_i \}$ and $\mathscr{E}'=\{P_i'\}$ be two sequences of regular scales; 
since $M$ is compact and both sequences of scales are PL according to the same PL structure, we will be able to construct a new sequence of scales 
$\mathscr{E}''=\{P_i''\}$ (which may not be regular) whose elements $(P_i'', |{\rm Sd}\,P_i'' |, t_i'')$ refine corresponding elements of $\mathscr{E}$ and $\mathscr{E}'$ in the obvious sense. This sequence of scales will be a key ingredient in proving the results that we present in this section. 

Given the scales $(P_0, |{\rm Sd}\,P_0 |, t_0)$ and $(P'_0, |{\rm Sd}\,P'_0 |, t'_0)$ there is a triangulation $(|\Delta_0''|, \psi_0)$ which refines 
triangulations $(|{\rm Sd}\,P_0 |, t_0)$ and $(|{\rm Sd}\,P'_0 |, t'_0)$. Clearly, it also refines the cellular decompositions $P_0$ and $P'_0$. 
Now observe that 
for any choice of $t''_0$ the triangulation $|{\rm Sd}\,\Delta_0''| ; t''_0)$, being a refinement of $(|\Delta_0''|, \psi_0)$, is also a refinement of 
triangulations $(|{\rm Sd}\,P_0 |, t_0)$ and $(|{\rm Sd}\,P'_0 |, t'_0)$. 
Thus, the first element of the new sequence of scales can be considered to be $(|\Delta_0''|, \psi_0; |{\rm Sd}\,\Delta_0''| ; t''_0)$. 

We proceed to construct the $i$th element of the new sequence of scales for $i>0$. 
Given the scales $(P_i, |{\rm Sd}\,P_i |, t_i)$ and $(P'_i, |{\rm Sd}\,P'_i |, t'_i)$ there is a triangulation $(|\Delta_i''|, \psi_i)$ which refines 
triangulations $(|{\rm Sd}\,P_0 |, t_0)$, $(|{\rm Sd}\,P'_0 |, t'_0)$ and $(|\Delta_{i-1}''|, \psi_{i-1})$. Notice that it also refines $P_i$ and $P'_i$. 
Now we use the same arguments as above: we see that the triangulation $|{\rm Sd}\,\Delta_i''| ; t''_i)$ is a refinement of 
triangulations $(|{\rm Sd}\,P_i |, t_i)$ and $(|{\rm Sd}\,P'_0 |, t'_0)$ for any $t''_i$. If we adjust $t''_i$, the triangulation 
$(|{\rm Sd}\,P_i |, t_i)$ also refines $(|{\rm Sd}\,P_{i-1} |, t_{i-1})$ which means that $\mathscr{E}''=\{ (|\Delta_i''|, \psi_i; |{\rm Sd}\,\Delta_i''| ; t''_i) \}$ 
is a sequence of scales.

Thus there are three projective limits of vector spaces
$$
	\underleftarrow{\Omega}_{p,{\mathscr E}},\qquad
	\underleftarrow{\Omega}_{p,{\mathscr E}''},\qquad	
	\underleftarrow{\Omega}_{p,{\mathscr E}'} . 
$$
By the common refinement property of the elements of 
$\mathscr{E}''$ we have affine maps 
$\boldsymbol{\varpi}_i:{\Omega}_{i,p,\mathscr{E}''}\rightarrow
{\Omega}_{i,p,\mathscr{E}}$, and 
$\boldsymbol{\varpi}_i':{\Omega}_{i,p,\mathscr{E}''}\rightarrow
{\Omega}_{i,p,\mathscr{E}'}$, such that the following diagrams commute
$$\xymatrix{
 &
 \Omega_{i,p,\mathscr{E}}
 &
 \Omega_{j,p,\mathscr{E}}\ar[l]_{\ti{\pi}_{ji}}
\\
\Omega_{i,p,\mathscr{E}''}
\ar[ur]^{\boldsymbol{\varpi}_i}\ar[dr]^{\boldsymbol{\varpi}_i'}
&
\Omega_{j,p,\mathscr{E}''}
\ar[l]_{\ti{\pi''}_{ji}}
\ar[ur]^{\boldsymbol{\varpi}_j}\ar[dr]^{\boldsymbol{\varpi}_j'}
&\\
&
\Omega_{i,p,\mathscr{E}'}
&
\Omega_{j,p,\mathscr{E}'}\ar[l]^{\ti{\pi'}_{ji}}
}.$$
Therefore, taking the projective limit we have the maps
$$\xymatrix{
&\underleftarrow{\Omega}_{p,{\mathscr E}}\\
\underleftarrow{\Omega}_{p,{\mathscr E}''}
\ar[ur]^{\boldsymbol{\varpi}}\ar[dr]^{\boldsymbol{\varpi}'}\\
&\underleftarrow{\Omega}_{p,{\mathscr E}'}
}.$$

Notice the curvature maps 
$\mathrm{Curv}_{\mathscr{E}}: \AGp\rightarrow\underleftarrow{\Omega}_{p,{\mathscr E}}$, 
$\mathrm{Curv}_{\mathscr{E}'}: \AGp\rightarrow\underleftarrow{\Omega}_{p,{\mathscr E}'}$ 
and $\mathrm{Curv}_{\mathscr{E}''} : \AGp\rightarrow\underleftarrow{\Omega}_{p,{\mathscr E}''}$ 
are defined, and they are compatible with the structures described above 
in the sense that 
$$
\boldsymbol{\varpi} \circ \mathrm{Curv}_{\mathscr{E}''} = \mathrm{Curv}_{\mathscr{E}} \, \, , \, \, 
\boldsymbol{\varpi}' \circ \mathrm{Curv}_{\mathscr{E}''} = \mathrm{Curv}_{\mathscr{E}'} .
$$

Let $\rho,\rho'$ and $\rho''$ be gaussian probability measures on $\underleftarrow{\Omega}_{p,{\mathscr E}}$, $\underleftarrow{\Omega}_{p,{\mathscr E}'}, $ and  $\underleftarrow{\Omega}_{p,{\mathscr E}''}$, centered on $\omega^0,(\omega^0)'$ and $(\omega^0)''$ respectively, with $\varpi((\omega^0)'')=\omega^0$ and $\varpi'((\omega^0)'')=(\omega^0)'$. 
The covariance of the measures is defined 
by the same area form on $M$ as in Theorem \ref{tma:Q}. We will denote by $\Omega_\rho$, $\Omega_{\rho'}$ and $\Omega_{\rho''}$ the supports of $\rho,\rho'$ and $\rho''$ respectively. These measure spaces are related as stated in the following lemma. 
\begin{lma}\label{lma:isomorphism}
The 
$(\omega^0)''$-$(\omega^0)$ homogeneous affine 
map $\boldsymbol{\varpi}$ and 
the 
$(\omega^0)''$-$(\omega^0)'$ homogeneous affine 
map $\boldsymbol{\varpi}'$ 
induce affine isomorphisms 
from $\Omega_{\rho''}$ to 
$\Omega_\rho$ and $\Omega_{\rho'}$ respectively. 
Moreover, the isomorphism 
$$
\boldsymbol{\varpi}\circ(\boldsymbol{\varpi}')^{-1} : \Omega_{\rho'} \to \Omega_{\rho}
$$
is independent of our choice of $\mathscr{E}''$. 
\end{lma}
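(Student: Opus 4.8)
The plan is to prove the statement in two movements: first show that $\boldsymbol{\varpi}$ and $\boldsymbol{\varpi}'$ transport the Gaussian measure $\rho''$ exactly onto $\rho$ and $\rho'$, and then identify all three supports with one and the same intrinsic Banach space of curvature data coordinatized by the $S$-curvature functions, so that $\boldsymbol{\varpi}$ and $\boldsymbol{\varpi}'$ become tautological coordinate identifications, whence affine isomorphisms.

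First I would establish that the dual maps $\boldsymbol{\varpi}^\ast$ and $(\boldsymbol{\varpi}')^\ast$ are isometries for the forms $\mathcal{Q}$. At scale $i$ the map $\boldsymbol{\varpi}_i:\Omega_{i,p,\mathscr{E}''}\to\Omega_{i,p,\mathscr{E}}$ sends a fine curvature evaluation to the coarse one obtained by summing over the $\mathscr{E}''$-simplices tiling each $\mathscr{E}$-simplex; hence its dual sends the generator $e^k$ attached to a coarse $2$-simplex $\sigma^k$ to the sum of the generators attached to the fine simplices composing $\sigma^k$. Since the area form is additive over this tiling, $Q_{i,\mathscr{E}''}(\boldsymbol{\varpi}_i^\ast\check f,\boldsymbol{\varpi}_i^\ast\check f)=Q_{i,\mathscr{E}}(\check f,\check f)$, and passing to the injective limit shows $\boldsymbol{\varpi}^\ast$ is a $\mathcal{Q}$-isometry. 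By Theorem \ref{tma:Q} the characteristic functional of $\boldsymbol{\varpi}_\ast\rho''$ at $\check f$ equals $\exp(-Q_{\mathscr{E}''}(\boldsymbol{\varpi}^\ast\check f,\boldsymbol{\varpi}^\ast\check f)/2)=\exp(-Q_{\mathscr{E}}(\check f,\check f)/2)$; together with $\boldsymbol{\varpi}((\omega^0)'')=\omega^0$ and uniqueness of a measure with prescribed characteristic functional this gives $\boldsymbol{\varpi}_\ast\rho''=\rho$, and likewise $\boldsymbol{\varpi}'_\ast\rho''=\rho'$. In particular, since the support of a pushforward under a continuous map is the closure of the image of the support, $\Omega_\rho=\overline{\boldsymbol{\varpi}(\Omega_{\rho''})}$.

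Next I would use the $S$-curvature functions of Section \ref{sec6} to pin down the supports as sets. The compatibility $\boldsymbol{\varpi}\circ\mathrm{Curv}_{\mathscr{E}''}=\mathrm{Curv}_{\mathscr{E}}$, the density of Lemma \ref{ImageCurvIsDense}, and the $L^2$ construction of $F_S$ yield $F_S^{\mathscr{E}}\circ\boldsymbol{\varpi}=F_S^{\mathscr{E}''}$ for every PL surface $S$; equivalently $\boldsymbol{\varpi}^\ast$ carries the isometric copy of $\underrightarrow{\Omega'}$ inside $\mathcal{H}_\rho$ provided by Lemma \ref{lma:new} onto the corresponding copy inside $\mathcal{H}_{\rho''}$, matching $\check F_S$ with $\check F_S$. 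Because the functions $F_S$ separate points of the support (the elements of $\underrightarrow{\Omega'}$ separate points of $\underleftarrow{\Omega}_p$), the assignment $\omega\mapsto(F_S(\omega))_S$ realizes each of $\Omega_\rho$, $\Omega_{\rho'}$, $\Omega_{\rho''}$ as the \emph{same} space of admissible families of surface integrals described in \cite{Di} (the reflexive separable space $\Omega_{bv}^0$ of Theorem \ref{2d} when $\dim M=2$). Under these identifications $\boldsymbol{\varpi}$ and $\boldsymbol{\varpi}'$ preserve every coordinate $F_S$, so they correspond to the identity and are affine isomorphisms with affine inverses.

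Finally, the composite $\boldsymbol{\varpi}\circ(\boldsymbol{\varpi}')^{-1}:\Omega_{\rho'}\to\Omega_\rho$ is by construction the unique affine isomorphism satisfying $F_S^{\mathscr{E}}\circ\boldsymbol{\varpi}\circ(\boldsymbol{\varpi}')^{-1}=F_S^{\mathscr{E}'}$ for every PL surface $S$, a characterization referring only to $\mathscr{E}$, $\mathscr{E}'$ and their $S$-curvature functions and making no mention of the common refinement; hence it is independent of the choice of $\mathscr{E}''$. The main obstacle I anticipate is precisely the injectivity of $\boldsymbol{\varpi}|_{\Omega_{\rho''}}$ together with continuity of its inverse, i.e. proving that an element of the topological support is faithfully and completely coded by its whole family of surface integrals. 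This is the delicate description of the support of the Gaussian measure carried out in \cite{Di}, and it is exactly what upgrades the equality $\boldsymbol{\varpi}_\ast\rho''=\rho$ from a statement about measures into a bijection of supports.
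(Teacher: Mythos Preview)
Your strategy diverges from the paper's and rests on two ingredients that are not available in the generality of the lemma.

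First, you freely write $F_S^{\mathscr{E}''}$ for arbitrary PL surfaces $S$ and use the identity $F_S^{\mathscr{E}}\circ\boldsymbol{\varpi}=F_S^{\mathscr{E}''}$ as the backbone of the argument. But the paper constructs the common refinement $\mathscr{E}''$ explicitly noting that it ``may not be regular''; the $L^2$ convergence producing $F_S$ in Section~\ref{sec6} uses the regularity conditions~\ref{R1}--\ref{R2}, so for a general PL surface $S$ the object $F_S^{\mathscr{E}''}$ is simply undefined. Only the evaluations $F_\sigma^{\mathscr{E}''}$ for $\sigma$ simplicial in some scale of $\mathscr{E}''$ are at your disposal, and these are precisely the functions you must reach from the $\mathscr{E}$ side, not the other way around.

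Second, your injectivity step coordinatizes each support by $\omega\mapsto(F_S(\omega))_S$ and identifies all three with ``the same'' reflexive Banach space, appealing to Theorem~\ref{2d} and to \cite{Di}. Theorem~\ref{2d} is stated only for $\dim M=2$; in higher dimension no such description of the support is provided. Moreover, even when it holds, the $F_S$ for non-simplicial $S$ are $L^2$-limits and hence only $\rho$-a.e.\ defined, so the pointwise coordinate map on $\Omega_\rho$ is not a priori well-defined. You correctly flag this as the main obstacle, but deferring it to \cite{Di} leaves the lemma unproved in the generality in which it is stated.

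The paper proceeds differently and avoids both issues. It never forms $F_S^{\mathscr{E}''}$ for general $S$; instead it shows that $\boldsymbol{\varpi}^\ast(\underrightarrow{\Omega'}_{\mathscr E})$ is $\|\cdot\|_{\mathcal{Q}''}$-dense in $\underrightarrow{\Omega'}_{\mathscr E''}$ by approximating each $\mathscr{E}''$-simplex $\sigma$ by its $\mathscr{E}$-$i$Reps $\sigma_i$ (so only the regularity of $\mathscr{E}$ is used). Injectivity of $\boldsymbol{\varpi}|_{\Omega_{\rho''}}$ is then obtained by contradiction: a nonzero kernel direction $\vec\omega$ inside the support would be annihilated by every $\boldsymbol{\varpi}^\ast(h)$ and hence, by density, could not be separated by $\underrightarrow{\Omega'}_{\mathscr E''}$, contradicting the separating property recorded after Lemma~\ref{lma:new}. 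For the independence clause the paper uses $\boldsymbol{\varpi}\circ(\boldsymbol{\varpi}')^{-1}\circ\mathrm{Curv}_{\mathscr{E}'}=\mathrm{Curv}_{\mathscr{E}}$ together with the density of $\mathrm{Curv}(\AGp)$ from Lemma~\ref{ImageCurvIsDense}; this is close in spirit to your $F_S$-characterization but sidesteps any pointwise-evaluation issue on the supports.
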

\begin{proof}
First notice that since $\boldsymbol{\varpi}_*\rho''=\rho$ then $\boldsymbol{\varpi}(\Omega_{\rho''})\subset\Omega_\rho$, and $\varpi:\Omega_{\rho''}\rightarrow\Omega_{\rho}$ is a surjective linear map.

Let us prove that it is also an injective map. Let $\mathcal{H}_\rho,\mathcal{H}_{\rho'},\mathcal{H}_{\rho''}$ be the Hilbert spaces given as in Definition \ref{dfn:new} 
for the measures $\rho,\rho',\rho''$ respectively.
Consider the dual isometries
$$\xymatrix{
&\underrightarrow{\Omega'}_\mathscr{E}
\ar[dl]^{(\boldsymbol{\varpi})^*}
\\
\underrightarrow{\Omega'}_{\mathscr{E}''}
\\
&\underrightarrow{\Omega'}_{\mathscr{E}'}
\ar[ul]^{(\boldsymbol{\varpi}')^*}
} . $$
We claim that 
the isometry $(\boldsymbol{\varpi})^*$ can be extended to an isomorphism between the Hilbert spaces 
$\mathcal{H}_\rho$,  $\mathcal{H}_{\rho''}$. 
In order to prove it we need to see that 
the image of $(\boldsymbol{\varpi})^*\left(\underrightarrow{\Omega'}_\mathscr{E}\right)\subset\underrightarrow{\Omega'}_{\mathscr{E}''}$ under the corresponding isometric inclusion of Lemma \ref{lma:new}, $\mathcal{I}:\underrightarrow{\Omega'}_{\mathscr{E}''}\rightarrow \mathcal{H}_{\rho''}$, is a dense subspace of $\mathcal{H}_{\rho''}$.
Notice that 
since $\boldsymbol{\varpi}^*$ is an isometry, 
for any $2$-simplex $\sigma\in ({\rm Sd}\,|\Delta_j''|^2,\phi_j'')$ the affine function
$F_{{\sigma}}:\underleftarrow{\Omega}_{p,{\mathscr E}''}\rightarrow\mathbb{R}$ 
is an accumulation point of the sequence $\{ F_{\sigma_i}\circ\boldsymbol{\varpi} \}_\mathscr{E}$ 
induced by the $\mathscr{E}$ $i$Reps of $\sigma$. 
Thus, Lemma \ref{lma:new} implies that our claim holds.

Now we come back to the proof of injectivity of the map $\boldsymbol{\varpi}:\Omega_{\rho''}\rightarrow\Omega_\rho$. Suppose that there exists a one dimensional affine subspace $L\subset \Omega_{\rho''}\subset \underleftarrow{\Omega}_{p,{\mathscr E}''}$ that lies in the kernel of 
$\boldsymbol{\varpi}:\underleftarrow{\Omega}_{p,{\mathscr E}''}
\rightarrow\underleftarrow{\Omega}_{p,{\mathscr E}}$. 
Let $\vec{\omega}\in \underleftarrow{\Omega}^0_{\mathscr{E}''}$ be a generator of the linear subspace $L-\omega^0 \subset \underleftarrow{\Omega}^0_{\mathscr{E}''}$. 
Since $L\subset \Omega_{\rho''}$, 
Lemma \ref{lma:new} implies that 
there is $f \in \underrightarrow{\Omega'}_{\mathscr{E}''}$ with $\|f\|_{\mathcal{Q}''}\neq 0$ such that 
$f(\vec{\omega}) \neq 0$. 

On the other hand, 
$L$ being in the kernel of 
$\boldsymbol{\varpi}:\underleftarrow{\Omega}_{p,{\mathscr E}''}
\rightarrow\underleftarrow{\Omega}_{p,{\mathscr E}}$ 
means that if the sequence of linear functions $\boldsymbol{\varpi}^*(h_n) \in \underrightarrow{\Omega'}_{\mathscr{E}''}$ 
converges to $h$ contained in the 
$\|\cdot\|_{\mathcal{Q}''}$-closure of $\underrightarrow{\Omega'}_{\mathscr{E}''}$, 
then $h(\vec{\omega})=0$. 
Therefore $f$ cannot belong to the closure of $\boldsymbol{ \varpi}^*(\underrightarrow{\Omega'}_{\mathscr{E}})$, which contradicts the fact that 
$\mathcal{I}\left(\boldsymbol{\varpi}^*(\underrightarrow{\Omega'}_{\mathscr{E}})\right)$ is a dense subspace of ${\mathcal{H}}_{\rho''}$. 

Now we will prove that 
$\boldsymbol{\varpi}\circ(\boldsymbol{\varpi}')^{-1} : \Omega_{\rho'} \to \Omega_{\rho}$ 
is independent of our choice of $\mathscr{E}''$. 
Compatibility of the maps $\boldsymbol{\varpi}$, $\boldsymbol{\varpi}'$, $\boldsymbol{\varpi}''$ with the curvature maps implies 
$$
\boldsymbol{\varpi}\circ(\boldsymbol{\varpi}')^{-1} \circ \mathrm{Curv}_{\mathscr{E}'} = \mathrm{Curv}_{\mathscr{E}} . 
$$
This observation, and the denseness of 
$\mathrm{Curv}_{\mathscr{E}} (\AGp ) \subset \underleftarrow{\Omega}_{p,{\mathscr E}}$, 
$\mathrm{Curv}_{\mathscr{E}'} (\AGp ) \subset \underleftarrow{\Omega}_{p,{\mathscr E}'}$ 
imply that 
$\boldsymbol{\varpi}\circ(\boldsymbol{\varpi}')^{-1} : \Omega_{\rho'} \to \Omega_{\rho}$ 
is independent of our choice of $\mathscr{E}''$. 
\end{proof}

Notice that the notion of regular sequence of scales lets us define 
$S$-curvature functions
for any 
compact oriented PL-surface with boundary $S\subset M$. 
The $S$-curvature functions induced by different sequences of regular scales are related
\begin{tma}
Given any compact oriented PL-surface with boundary $S\subset M$, and any 
two regular sequences of scales 
${\mathscr{E}}$, ${\mathscr{E}'}$, 
the induced 
$S$-curvature functions are equivalent 
in the sense that 
$$
F_{S,\mathscr{E}} = 
F_{S,\mathscr{E}'} \circ \boldsymbol{\varpi}\circ(\boldsymbol{\varpi}')^{-1} . 
$$
\end{tma}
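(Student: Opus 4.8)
The strategy is to verify the identity first on the dense set of smooth curvature evaluations, where both sides compute the honest integral $\int_S F^A$, and then to extend by continuity. The genuinely hard work --- that $\boldsymbol{\varpi}\circ(\boldsymbol{\varpi}')^{-1}$ is a well-defined affine isomorphism $\Omega_{\rho'}\to\Omega_\rho$ independent of the auxiliary common refinement $\mathscr{E}''$ --- is already contained in Lemma \ref{lma:isomorphism}, so what remains is to glue the pieces together.

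First I would record that every object appearing in the claimed equation is a continuous affine function on the relevant support. The functions $F_{S,\mathscr{E}}$ and $F_{S,\mathscr{E}'}$ were produced as $L^2$-limits of the affine functions $F_{S_i}$ and, through the isometry of Lemma \ref{lma:new}, they are honest continuous homogeneous affine functions on $\Omega_\rho$ and $\Omega_{\rho'}$ respectively; by Lemma \ref{lma:isomorphism} the map $\boldsymbol{\varpi}\circ(\boldsymbol{\varpi}')^{-1}$ is a continuous affine isomorphism. Hence both sides of the asserted equality are continuous affine functions, and it suffices to check their agreement on a dense subset.

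Next I would exploit the compatibility of the curvature maps with the refinement projections. From $\boldsymbol{\varpi}\circ\mathrm{Curv}_{\mathscr{E}''}=\mathrm{Curv}_{\mathscr{E}}$ and $\boldsymbol{\varpi}'\circ\mathrm{Curv}_{\mathscr{E}''}=\mathrm{Curv}_{\mathscr{E}'}$ one obtains $\mathrm{Curv}_{\mathscr{E}}=\boldsymbol{\varpi}\circ(\boldsymbol{\varpi}')^{-1}\circ\mathrm{Curv}_{\mathscr{E}'}$, so that for every $[A]\in\AGp$ the two evaluations $\mathrm{Curv}_{\mathscr{E}}([A])$ and $\mathrm{Curv}_{\mathscr{E}'}([A])$ correspond to one another under the isomorphism. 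Applying Corollary \ref{cor:F_S} to each sequence of scales gives $F_{S,\mathscr{E}}(\mathrm{Curv}_{\mathscr{E}}([A]))=\int_S F^A=F_{S,\mathscr{E}'}(\mathrm{Curv}_{\mathscr{E}'}([A]))$. Substituting the relation between the two curvature maps shows that the claimed equality of functions holds at every point of the image $\mathrm{Curv}_{\mathscr{E}'}(\AGp)$, equivalently of $\mathrm{Curv}_{\mathscr{E}}(\AGp)$.

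Finally I would close the argument by density. Lemma \ref{ImageCurvIsDense}, applied to the sequences $\mathscr{E}$ and $\mathscr{E}'$, shows that the smooth curvature evaluations are dense in $\underleftarrow{\Omega}_{p,\mathscr{E}}$ and $\underleftarrow{\Omega}_{p,\mathscr{E}'}$; since these evaluations lie in the supports $\Omega_\rho$, $\Omega_{\rho'}$, they are dense there as well. Two continuous affine functions agreeing on a dense set coincide, which upgrades the pointwise identity on the dense image to the desired identity of functions. The only point requiring care --- and hence the main obstacle --- is precisely this last transfer of density from $\underleftarrow{\Omega}_{p}$ to the Gaussian support $\Omega_\rho$, together with checking that the domains of the composed maps match so that the two sides are literally functions on the same affine space; both are guaranteed once one knows $\mathrm{Curv}_{\mathscr{E}}(\AGp)\subset\Omega_\rho$ and that $F_{S,\mathscr{E}}$ is genuinely continuous on $\Omega_\rho$.
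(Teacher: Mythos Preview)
Your proposal is correct and follows essentially the same route as the paper: establish the identity on the dense image $\mathrm{Curv}(\AGp)$ via the compatibility relation $\boldsymbol{\varpi}\circ(\boldsymbol{\varpi}')^{-1}\circ\mathrm{Curv}_{\mathscr{E}'}=\mathrm{Curv}_{\mathscr{E}}$ together with Corollary~\ref{cor:F_S}, and then extend by density (Lemma~\ref{ImageCurvIsDense}). The paper's proof is considerably terser, but your added care about continuity, domains, and the transfer of density to the supports $\Omega_\rho$, $\Omega_{\rho'}$ simply makes explicit what the paper leaves implicit.
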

\begin{proof}
The property 
$$
\boldsymbol{\varpi}\circ(\boldsymbol{\varpi}')^{-1} \circ \mathrm{Curv}_{\mathscr{E}'} = \mathrm{Curv}_{\mathscr{E}} ,
$$
together with the denseness of 
$\mathrm{Curv}_{\mathscr{E}} (\AGp ) \subset \underleftarrow{\Omega}_{p,{\mathscr E}}$ and 
$\mathrm{Curv}_{\mathscr{E}'} (\AGp ) \subset \underleftarrow{\Omega}_{p,{\mathscr E}'}$  
imply that
the desired equation holds. 
\end{proof}


\section{Outlook: Construction of physical theories}
Here we briefly describe how our work can be used to construct  
physical theories. 
An euclidean quantum field theory could be constructed following 
an implementation of Wilson's renormalization group: 
We described the space of histories for abelian gauge theories 
as the extended space of curvature evaluations, 
and we gave a family of 
kinematical gaussian measures $\rho$ on it. In order to construct 
an euclidean quantum field theory the starting point is a 
sequence of measures $\mu_i$ on $\Omega_{i,p}$ defined by an effective theory at 
scale $i$. At each scale the physical measure is continuous with respect to the Lebesge measure and with respect to the gaussian measure; i.e. we can write 
$d\mu_i=W_i(\omega_i, \lambda_i) d{\mathsf Leb}=W'_i(\omega_i, \lambda_i) d\rho_i$. 
As usual, the coupling constant (or constants) $\lambda_i$ is determined by 
a renormalization prescription; that is, adjusting the constant(s) to reach 
agreement with a chosen macroscopic ``experiment'' (or experiments). 

The resulting sequence of measures $d\mu_i$ may not be compatible with coarse graining; in fact, one would expect that effective theories at finer scales be better even when modeling the behavior of observables of macroscopic character. Thus 
if $i \leq j$, the coarse graining of $d\mu_j$ to scale $i$ is expected to correct $d\mu_i$; we will call this corrected measure $d\mu_{i(j)}$. If the corrected measures $d\mu_{i(j)}$ converge as $j \to \infty$ we say that the resulting measure is completely renormalized, 
$d\mu_i^{\mathsf ren}= \lim_{j \to \infty} d\mu_{i(j)}$. 

As described above the completely renormalized measure can only yield predictions for observables which are functions on one of the spaces $\Omega_{i,p}$, but as we saw in previous sections of the article, our framework let us study other observables using a regularization procedure. An observable can be regularized at each scale resulting in a function $f_i : \Omega_{i,p} \to \mathbb{R}$ and we may ask for the convergence of 
$\lim_{i \to \infty} \mu_i(f_i)$. This limit and the limit written in the previous paragraph are related (see \cite{MOWZ}). 
The procedure followed in Section \ref{sec6} let us treat a family of observables which is independent of the sequence of scales used to define the continuum limit. Then we can ask if the resulting continuum limit enjoys of the same symmetries as the classical theory, or if the breaking of the classical symmetry group by the introduction of an arbitrary sequence of measuring scales persist after the continuum limit is performed. 
The analysis done in Section \ref{indep} could be seen as a warm up for dealing with recovery of symmetry through the continuum limit for the case of a physically relevant measure. 
If our field theory is defined on Minkowski space, we should study the validity of the Osterwalder-Schrader axioms; in particular, the issue of symmetry recovery at the continuum limit is one of the points that must be addressed. 
If we are attempting to construct a theory in which there is no metric background, like in quantum gravity, we would work along lines similar to those pioneered in \cite{AMMT} to finalize the construction of a quantum gauge field theory. 

\section*{Acknowledgements}
JZ acknowledges partial support from CONACyT grant 80118.


\end{document}